\theoremstyle{theorem}\newtheorem{Proposition}{Proposition}
\newtheorem{Definition}{Definition}
\theoremstyle{theorem}\newtheorem{Remark}{Remark}
\theoremstyle{theorem}
\theoremstyle{theorem}\newtheorem{Lemma}{Lemma}
\theoremstyle{theorem}\newtheorem{Theorem}{Theorem}
\theoremstyle{theorem}\newtheorem{Corollary}{Corollary}
\theoremstyle{definition}\newtheorem{example}{Example}
\newtheorem*{theoremthree}{Theorem \ref{th:theorem3}}
\newtheorem*{theoremfour}{Theorem \ref{th:theorem4}}
\newtheorem*{propone}{Proposition \ref{prop:proposition1}}
\newtheorem*{proptwo}{Proposition \ref{prop:proposition3}}
\renewcommand\section{\@startsection {section}{1}{\z@}%
                                   {-1.5ex \@plus -1ex \@minus -.2ex}%
                                   {1.5ex \@plus.2ex}%
                                   {\normalfont\scshape}}
\renewcommand\subsection{\@startsection{subsection}{2}{\z@}%
                                     {-1.5ex\@plus -1ex \@minus -.2ex}%
                                     {0.25ex \@plus .2ex}%
                                     {\normalfont\itshape}}
\title{Final topology for preference spaces\footnote{I wish to thank Kim Border, Laura Doval, Federico Echenique, Mallesh Pai, Omer Tamuz, and participants of the LA theory fest for insightful discussions that helped shape this paper. All remaining errors are, of course, my own.}} 
\author{Pablo Schenone\footnote{Fordham University, Rose Hill Campus, 441 E. Fordham Road, Bronx, NY 10458} }
\begin{document}
\maketitle
\begin{abstract}

We say a model is continuous in utilities (resp., preferences) if small perturbations of utility functions (resp., preferences) generate small changes in the model's outputs. While similar, these two questions are different. They are only equivalent when the following two sets are isomorphic: the set of continuous mappings from \emph{preferences} to the model's outputs, and the set of continuous mappings from \emph{utilities} to the model's outputs. In this paper, we study the topology for preference spaces defined by such an isomorphism. This study is practically significant, as continuity analysis is predominantly conducted through utility functions, rather than the underlying preference space. Our findings enable researchers to infer continuity in utility as indicative of continuity in underlying preferences.
\end{abstract}
\small
\textsc{Keywords:} decision theory, topology \\
\textsc{JEL classification: B4, D} 
\normalsize
\newpage

\section{Introduction}\label{sec:intro}

\indent Economic models are mappings from exogenous variables to endogenous variables (henceforth, model outputs or outcomes). A common exogenous variable is the agent's \emph{preferences}, often modeled via a \emph{utility function}. Comparative statics exercises ask whether small perturbations to \emph{preferences} have a large impact on the model's outcomes. This exercise typically involves analyzing whether small perturbations to an agent's \emph{utility function} have a large impact on the model's outcomes.

\indent For example, in classical demand theory, market prices, consumer's wealth, and consumer's preferences are the exogenous variables, and the consumer's demand function is the model's outcome. Holding fixed all variables but the preferences, we want to know whether small perturbations to the consumer's \emph{preferences} will have a large impact on their demand. However, what we generally study is whether small perturbations to the consumer's \emph{utility function} will have a large impact on the consumer's demand. While similar, these two questions are different. They are only equivalent when the following two sets are isomorphic: the set of continuous mappings from \emph{preferences} to demands, and the set of continuous mappings from \emph{utilities} to demands. More generally, given any set of model outcomes, $Z$, continuity in preference and continuity in utilities coincide whenever the set of continuous functions from preferences to $Z$ is isomorphic to the set of continuous functions from utilities to $Z$.

\indent The isomorphism discussed above is a practical economic requirement for preference spaces. Preferences, being the economically meaningful object, are usefully represented by utility functions. If preference spaces admit more (or less) continuous mappings than utility function spaces, models may be continuous in preferences but not in utilities, or vice versa. Consequently, although utilities are order-faithful representations of preferences, they need not be a topologically-faithful representation of preferences. The aforementioned isomorphism guarantees that utilities are both an order and topologically faithful representation of preferences. Because economic theory often analyzes a model's continuity using the utility function space, ensuring utility functions are a topologically-faithful representation of preferences is important.

\indent Figure \ref{fig:final topology} depicts the isomorphism between the set of continuous function from \emph{preferences} into outcomes and the set of continuous functions from \emph{utilities} into outcomes. Let $F$ be the mapping that assigns to each utility function, $u$, the preference, $\succ$, that $u$ represents. Our objective is to find a topology for preference spaces that satisfies the following property. For each continuous function, $g$, that maps preferences into outcomes, a unique continuous function, $h$, exists that maps utilities into outcomes and makes the diagram in Figure \ref{fig:final topology} commute. That is, for all utility functions, $u$, $h(u)=(g\circ F)(u)$. Such a topology is known as the \emph{final topology}. 

\begin{figure}[H]
\begin{center}
\begin{tikzpicture}
\draw (0,4) node[left=3pt] {Space Of Preferences};
\draw (0,0) node[left=3pt] {Space Of Utility Functions};
\draw (4,4) node[right=3pt] { Outcome Space};
arrows
\draw[thick, ->, black] (-2,0.5) --node[left] {Representation Mapping, $F$} (-2,3.5);
\draw[thick, ->, black] (-1.75,0.5) -- node[below=3pt] {$h$} (3.5,3.5);
\draw[thick, ->, black] (0.5,4) -- node[above] {$g$} (3.5,4);
\end{tikzpicture}
\end{center}
\caption{The final topology on the preference space makes the diagram commute and preserves all  continuous mappings.\label{fig:final topology}}
\end{figure}
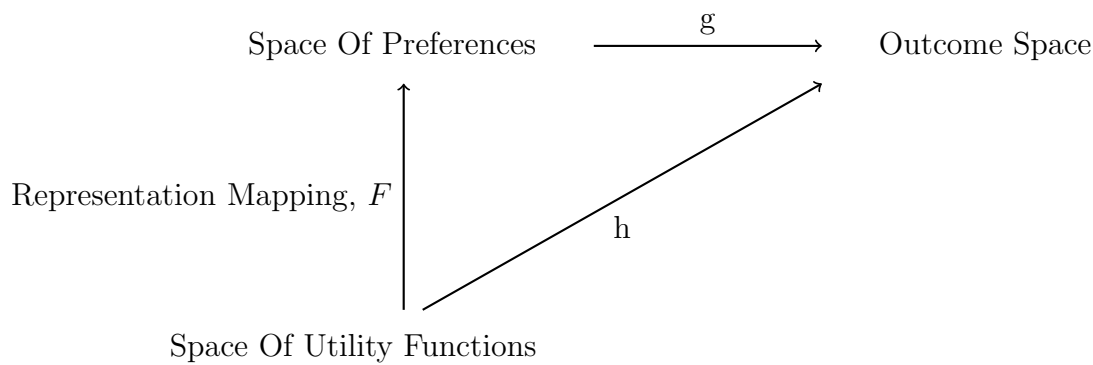

\indent In this paper, we characterize the final topology on preference spaces. Theorem \ref{th:theorem2} characterizes this topology in terms of its basis. Given any preference, $\succ$, and any finite subset of alternatives, $A$, that is strictly ranked according to $\succ$, all preferences that agree with $\succ$ over $A$ form an open set. For each fixed $\succ$, intersecting over a finite family of finite of such sets $A$ generates a typical basis element of the final topology.

\indent Theorem \ref{th:theorem2} allows us to study the properties of the final topology for preference spaces. We focus on two important properties: the Hausdorff property and path-connectedness. The Hausdorff property states that any two distinct preferences can be separated: that is, there is a neighborhood of the first and a neighborhood of the second that are disjoint. Path-connectedness states that given any two preferences, the first can be continuously transformed into the second. 

\indent The Hausdorff property and path-connectedness are essential for economic exercises, such as comparative statics or continuity analysis. First, to determine whether model outcomes converge as preferences converge to a limit, we need a well-defined notion of limit. The Hausdorff property guarantees that if a sequence of preferences admits a limit, the limit is unique. This uniqueness is essential for meaningful continuity analysis. For example, consider a sequence of consumers all of whom have the same preferences, $\succ$, and corresponding demands, $D$. If the Hausdorff property fails, this sequence of preferences could admit a limit $\hat{\succ}\neq\succ$, and the sequence of demands may admit a limit demand $\hat{D}\neq D$, none of which is economically meaningful. Second, if we want to ask how model outcomes change when we perturb preferences ``slightly", we need a notion of  a ``slight" change to a preference. Path-connectedness implies that any neighborhood of any preference contains at least another preference. Without this property, there would be (at least one) preference, $\succ$, such that the only ``close by" preferences would be $\succ$ itself. This would render any kind of comparative statics analysis meaningless. Moreover, path-connectedness means that given any two preferences, $\succ_0$ and $\succ_1$, we can continuously transform $\succ_0$ into $\succ_1$. Continuous transformation of one preference into another is the basis for homotopy-based comparative statics exercises (see Shiomura \cite{shiomura1998hicksian}, Borkovsky \cite{borkovsky2010user}, and Eaves and Schmedders \cite{eaves1999general} for applications).

\indent From Theorem \ref{th:theorem2}, Corollary \ref{corollary:properties} shows the final topology is path-connected but is not Hausdorff. The final topology is not Hausdorff because the space of all preferences admits preferences with indifferences. As such, the natural follow-up question is whether the subspace of strict preferences--i.e., preferences where no two alternatives are indifferent--is both Hausdorff and path-connected.

\indent Theorem \ref{th:theorem3} shows the space of strict preferences is Hausdorff, and Proposition \ref{prop:proposition1} shows the set of strict preferences is the largest set of preferences that is both Hausdorff and includes all strict preferences. However, the space of strict preferences is totally path-disconnected. That the space of strict preferences is totally path-disconnected means that each singleton is an open set, so each preference is topologically isolated. As such, comparative statics exercises based on small perturbations to preferences cannot be carried out. 

\indent Together, Theorems \ref{th:theorem2} and \ref{th:theorem3}, and Proposition \ref{prop:proposition1} show path-connectedness and the Hausdorff property are mutually exclusive properties for preference spaces, which either hold or fail, depending on whether we allow for indifferences. Therefore, assumptions on indifference curves have topological implications. As such, the topological properties of preference spaces are not a mere technicality, but carry substantial economic meaning and methodological restrictions.

\indent The analysis thus far does not impose any topology on the space of alternatives, $X$. Section \ref{sec:extensions} discusses how our results change if we impose a topology on $X$. As discussed in Section \ref{sec:extensions}, a topology on $X$ may be imposed in one of two ways. First, for each preference relation, $\succ$, we can endow $X$ with the weakest topology on $X$ that makes $\succ$ continuous (see Definition \ref{def:continuous preference}). Such a topology is generated by the upper and lower contour sets of $\succ$, and guarantees that optimal decisions are continuous. That is, if an alternative $x$ is preferred to an alternative $y$, then all alternatives sufficiently close to $x$ are also preferred to $y$. This generates a family of topological preference spaces, each of which differs only in the topology we impose for $X$. Second, we may endow $X$ with a preference-independent topology, $\mathcal{T}_X$, and restrict attention only to preferences that are continuous relative to $\mathcal{T}_X$. Proposition \ref{prop:proposition3} shows the first approach is not enough to guarantee that preference spaces are Hausdorff in the final topology. Instead, Theorem \ref{th:theorem4} shows the second approach is enough to restore the Hausdorff property if we restrict attention to the set of continuous and \emph{locally strict} preferences: preferences such that, for any pair of alternatives, $x$ and $y$, there is a close-by pair, $x'$ and $y'$, such that $x'$ and $y'$ are not indifferent (see Border and Segal \cite{border1994dynamic} and Definition \ref{def:locally strict} for a formal definition of locally strict preferences). Thus, imposing a preference-independent topology on $X$ ensures the Hausdorff property without limiting the analysis to strictly strict preferences. Comparing the final topology on preferences when we do and do not impose a topology on the space of alternatives highlights the role that topological assumptions on the space of alternatives play in shaping the topology of preferences.

\indent Summarizing, the paper presents two key findings. First, it highlights a trade-off between ensuring utility functions are topologically-faithful representations of preferences and maintaining intuitive topological properties of the preference space. Utilities should be a topologically-faithful representation of preferences, so the set of continuous functions from preferences to model outcomes should be isomorphic to the set of continuous functions from utilities to model outcomes. Yet, the topology resulting from such an isomorphism is either Hausdorff (if one restricts attention to strict preferences) or path-connected (if one allows for indifferences), but never both. Second, restoring the Hausdorff property for non-strict preferences necessitates imposing a preference-independent topology on the space of alternatives over which preferences are defined.

\paragraph{Related Literature}

\indent To the best of our knowledge, most of the work conducted on topologies for preference spaces comes from the literature on general equilibrium (see Debreu \cite{debreu1967neighboring}, Hildenbrand \cite{hildenbrand1970economies}, Grodal \cite{grodal1974note}, Mas-Colell \cite{mas1974continuous}). Debreu \cite{debreu1967neighboring} proposes a topology on the space of continuous preferences that is based on the Hausdorff semimetric; Hildenbrand \cite{hildenbrand1970economies} provides generalizations on Debreu's work. In these works, the space of alternatives, $X$, is endowed with an exogenous topology, and a preference $\succ$ is said to be continuous if each upper and lower contour set of $\succ$ is open in the topology of $X$.\footnote{By ``upper contour set" we mean the sets $Upper(x)=\{y: y\succ x\}$ for each $x\in X$; by ``lower contour set" we mean the sets $Lower(x)=\{y: x\succ y\}$ for each $x\in X$.} In Debreu's work, $X$ is assumed to be compact, so each continuous preference $\succ$ is a closed \emph{compact} subset of $X\times X$. By contrast, Hildenbrand assumes $X$ is locally compact, so $\succ$ is compact in the one-point Alexandroff compactification of $X$. The goal of these assumptions is to associate each preference with a closed subset of a compact space; therefore, one obtains a separable space by endowing the preference space with the Hausdorff semimetric. Because the space of preferences is given the topology of the Hausdorff semimetric, Hildenbrand calls it the \emph{topology of closed convergence}. Lastly, Kannai \cite{kannai1970continuity} takes the work of Debreu and analyzes the special case when preferences are continuous \emph{and monotone}.

\indent Outside of the realm of general equilibrium, Chambers, Echenique and Lambert \cite{chambers2019recovering} use tools from the above literature to address the following question. Suppose one observes a finite but large dataset generated by picking maximal elements out of a preference $\succ$; can one find a preference $\hat{\succ}$ such that $\hat{\succ}$ rationalizes observed choices and $\hat{\succ}$ is ``close" to the true preference $\succ$? They provide a positive answer under the topology of closed convergence when preferences are assumed to be \emph{locally strict}. Recently, Nishimura and Ok \cite{nishimura2023class} study topologies for preferences over a finite set of alternatives. The authors propose and axiomatically characterize a class of semimetrics which regard preferences as close (resp., far) if the choice behavior induced by these preferences is close (resp., far). 

\indent Our paper contributes to this literature in three main ways. 

First, in practice, continuity and comparative statics analyses are performed in the space of utility functions. Thus, understanding how the topology on the preference space interacts with the topology on the space of utility functions is important. Because most research on topologies for preference spaces consider only the preference space, the resulting topologies may not always ensure that continuity in preferences and continuity in utilities coincide. Instead, the defining feature of the final topology is that it universally guarantees that continuity in preferences and continuity in utilities coincide.
\color{black}

\indent Second, relative to the literature on general equilibrium, Theorem \ref{th:theorem2} shows the final topology is not the topology of closed convergence. The final topology differs from the topology of closed convergence for two reasons. First, whereas the general equilibrium literature is mainly concerned with continuity of Marshallian demand functions with respect to preferences, we are concerned with preserving continuity for any general outcome space, not just the space of Marshallian demands. Second, most of our exercise is carried out without imposing a topology on the space of alternatives, $X$, whereas the topology of closed convergence requires endowing $X$ with an exogenous topology.\footnote{For example, in the Marshallian demand case, $X=\mathbb{R}^N$ and is typically endowed with the Euclidean metric.}

\indent Third, our paper compares the topological properties obtained when imposing a topology on the space of alternatives, $X$, with those obtained when not imposing a topology on $X$. This highlights how the topological properties of $X$ shape the final topology on preference spaces. We show the most general way to obtain a final topology that is both Hausdorff and path-connected is to endow $X$ with an exogenous topology and focus attention only on continuous and locally strict preferences. Because the set of continuous and locally strict preferences varies with the topology the analyst chooses for $X$, the chosen topology for $X$ determines which preferences are represented by utility functions in a topologically-faithful manner.

\paragraph{Organization}
\indent The rest of this paper is organized as follows.
Section \ref{sec:model} presents the model and Section \ref{sec:results} our main results. Section \ref{sec:extensions} extends the model to allow for topologies on the set of alternatives. All proofs are in the appendix.

\section{Model}\label{sec:model}

\indent Let $X$ be the set of alternatives, $\mathcal{U}$ be the set of all utility functions on $X$, and $\mathcal{P}$ be the set of all binary relations on $X$ that admit a utility representation. That is, for each $\succ\in \mathcal{P}$, a function $u\in\mathcal{U}$ exists such that for all $x,y\in X$, $x\succ y$ if and only if $u(x)>u(y)$. Whereas most preference representation theorems depend on $X$ being a separable topological space, Dubra and Echenique \cite{dubra2000full} characterize conditions under which a preference admits a utility representation in the absence of topological assumptions on $X$. Let $F:\mathcal{U}\rightarrow\mathcal{P}$ denote the mapping that takes utility function $u$ to the preference $u$ represents; that is, for all $x,y\in X$, $u(x)>u(y)$ if and only if $x\:F(u)\:y$. We call the mapping $F$ the \emph{representation mapping}. Finally, let $\mathcal{T}_U$ denote the pointwise topology on $\mathcal{U}$.

\indent Our goal is to find a topology on $\mathcal{P}$ that preserves the structure of continuous mappings from $\mathcal{U}$ to any abstract set $Z$ and makes the diagram in Figure \ref{fig:final topology} commute. Under such a topology the continuous mappings from $\mathcal{P}$ to $Z$ are in one-to-one correspondence with the continuous mappings from $\mathcal{U}$ to $Z$. This requirement is captured by the definition below. 

\begin{Definition}\label{def:final topology}
A \emph{final topology} on $\mathcal{P}$ is a topology $\mathcal{T}_P$ that satisfies the following universal property: a function $g:\mathcal{P}\rightarrow Z$ is continuous if and only if a unique continuous function $h:\mathcal{U}\rightarrow Z$ exists such that the diagram in Figure \ref{fig:final topology2} commutes.
\end{Definition}

\indent Graphically, the final topology on $\mathcal{P}$ is the topology on $\mathcal{P}$ that makes the diagram in Figure \ref{fig:final topology2} commute and preserves the structure of continuous mappings from utilities to any economic variable of interest, $Z$. The uniqueness condition guarantees there is a bijection between the continuous functions $g$ and the continuous functions $h$; thus, our topology endows the preference space with exactly the same continuous functions as admitted by the utility space. 

\begin{figure}[H]
\begin{center}
\begin{tikzpicture}
\draw (0,4) node {$\mathcal{P}$};
\draw (0,0) node {$\mathcal{U}$};
\draw (4,4) node {$Z$};
arrows
\draw[thick, ->, black] (0,0.5) --node[left] {$F$} (0,3.5);
\draw[thick, ->, black] (0.25,0.5) -- node[below=3pt] {$h$} (3.5,3.5);
\draw[thick, ->, black] (0.5,4) -- node[above] {$g$} (3.5,4);
\end{tikzpicture}
\end{center}
\caption{$\mathcal{T}_P$ makes the diagram commute and preserves continuous mappings. \label{fig:final topology2}}
\end{figure}
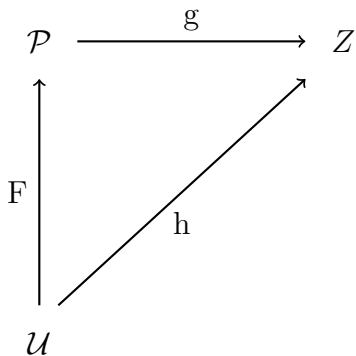 

\begin{Remark}
Moving forward, we need to distinguish between an instance of indifference \emph{within a preference} and the preference that corresponds to all alternatives being indifferent. We use the following notation:
\begin{itemize}
\item If $\succ\in \mathcal{P}$ we denote the weak part of $\succ$ by $\succsim$: $x\succsim y\Leftrightarrow y\nsucc x$.
\item If $\succ\in \mathcal{P}$ we denote the symmetric part of $\succ$ by $\sim$: $x\sim y\Leftrightarrow y\succsim x\:\text{ and }x\succsim y$.
\item Total indifference: if a preference satisfies that for all $x,y\in X$, $x\sim y$, then we use $\simeq$ to denote such preference.
\end{itemize}
\end{Remark}

\section{Results}\label{sec:results}

\indent In this section, we present our first two theorems. We briefly describe the theorems and follow up with formal statements. Proofs are provided in the Appendix. \\

\subsection{No restrictions on $\mathcal{P}$}\label{sec:allpreferences}

\indent We begin by introducing some notation that will help us construct the basis of $\mathcal{T}_P$, which we formally present in Theorem \ref{th:theorem2}. Let $\succ\in\mathcal{P}$ be any preference, and $A\subset X$ be any finite set such that for all $x,y\in A$ either $x\succ y$ or $y\succ x$, i.e., $x\nsim y$. We say that $A$ is \emph{$\succ$-strictly ranked}. Let the set $B(\succ,\:A)$ consist of all preferences that agree with $\succ$ on a set $A$ that is finite and $\succ$-strictly ranked. Formally, $B(\succ, A)=\{\hat{\succ}:\:(\forall x,y\in A)\: x\succ y\:\Leftrightarrow\: x\:\hat{\succ}\:y\}$. Sets of this form are open but do not constitute a basis of $\mathcal{T}_P$. A typical basis element is generated by fixing $\succ$ and taking intersection over finitely many such sets $A$. Formally, let $\alpha(\succ)$ be the collection of all finite families of such sets $A$, i.e., $\alpha(\succ)=\{\mathcal{A}\subset 2^X: \mathcal{A} \text{ is a finite family of finite and $\succ$-strictly ranked subsets of $X$}\}$. For any $\mathcal{A}\in\alpha(\succ)$ we denote by $B(\succ,\mathcal{A})$ the intersection of the sets $B(\succ, A)$ with $A\in\mathcal{A}$, i.e. $B(\succ,\mathcal{A})=\cap_{A\in\mathcal{A}}B(\succ, A)$. For each fixed preference, $\succ$, the set $B(\succ,\mathcal{A})$ is a typical basis element, as stated in Theorem \ref{th:theorem2} below.\\

\begin{Theorem}\label{th:theorem2}
Let $\mathcal{B}=\{B(\succ,\:\mathcal{A}):\:\succ\in\mathcal{P},\:\mathcal{A}\in\alpha(\succ)\}$. Then, $\mathcal{B}$ is a basis for $\mathcal{T}_{P}$.
\end{Theorem}

\indent The intuition behind Theorem \ref{th:theorem2} is simple. Lemma \ref{lemma:Fopen} in the Appendix shows $F$ is an open map. As such, we may characterize the basis elements of $\mathcal{T}_P$ by pushing forward the basis elements of $\mathcal{T}_U$ via $F$. Moreover, Example \ref{example:mathcalA} in the appendix shows why the sets $B(\succ,\:A)$ generate the basis of $\mathcal{T}_P$ although they do not form a basis for $\mathcal{T}_P$.

Armed with Theorem \ref{th:theorem2} we can study various properties of $\mathcal{T}_P$. Corollary \ref{corollary:properties} summarizes these properties:
\begin{Corollary}[Properties of the final topology]\label{corollary:properties}
The following hold: 
\begin{enumerate}
\item\label{itm:finite} The basis $\mathcal{B}$ of the final topology $\mathcal{T}_P$ can be generated by sets of the form $B(\succ, \{x,y\})$, where $x,y\in X$,
\item\label{itm:path} The topological space $(\mathcal{P},\mathcal{T}_P)$ is path-connected, and
\item\label{itm:hausdorff} The topological space $(\mathcal{P},\mathcal{T}_P)$ is not Hausdorff.
\end{enumerate}
\end{Corollary}

We now explain each of the observations in Corollary \ref{corollary:properties}.

\indent First, the basis $\mathcal{B}$ can be generated by sets of the form $B(\succ, \{x,y\})$, where $x,y\in X$. Starting from any preference, $\succ\in\mathcal{P}$, and any finite, $\succ$-strictly ranked set $A=\{x_1,...,x_N\}$, we obtain that $B(\succ,\{x_1,...,x_N\})=\cap_{i\neq j}B(\succ,\{x_i,x_j\})$. Therefore, finite intersections of sets of the form $B(\succ,\{x,y\})$ generate sets of the form $B(\succ,\:A)$, where $A$ is finite and $\succ$-strictly ranked. Taking another finite intersection over such sets $A$ generates each basis element in $\mathcal{B}$.

\indent Second, $\mathcal{P}$ is path-connected. Indeed, $\mathcal{U}$ is clearly path-connected.\footnote{Take any two functions $u,u'\in\mathcal{U}$. The transformation $t:[0,1]\times\mathcal{U}\rightarrow\mathcal{U}$ given by $t(s, u)=su+(1-s)u'$ continuously transforms $u$ into $u'$.} Since $\mathcal{P}=F(\mathcal{U})$ and $F$ is continuous, then $\mathcal{P}$ is path-connected as well.

\indent Finally, $\mathcal{T}_{P}$ is not Hausdorff. Theorem \ref{th:theorem2} implies that the only open neighborhood of $\simeq$ is $\mathcal{P}$, so $\simeq$ is topologically indistinguishable from any other preference. 

\subsection{Restricting $\mathcal{P}$ to strict preferences only}\label{sec:strictpreferences}

\indent In this section we examine preference spaces where the Hausdorff property holds. We concluded the previous section by showing the space $\mathcal{P}$ is not Hausdorff. We show $\mathcal{P}$ is not Hausdorff because any constant sequence admits $\simeq$ as a limit, prompting a natural question: is the space of strict preferences--that is, preferences where no two points are indifferent--Hausdorff? Furthermore, what is the largest set of preferences that maintains the Hausdorff property?

\indent Theorem \ref{th:theorem3} shows that in the subspace of strict preferences, the topology identified in Theorem \ref{th:theorem2} is Hausdorff but totally path-disconnected. To state Theorem \ref{th:theorem3}, we introduce some useful notation. Formally, let $\mathcal{P}^s=\{\succ\in\mathcal{P}:\: (\forall x,y\in X)\:x\nsim y\}$ and $\mathcal{U}^s=\{u\in\mathcal{U}:\:(\forall x,y\in X)\:u(x)\neq u(y)\}$. Analogously, we let $\mathcal{T}_{P^s}$ denote the final topology on $\mathcal{P}^s$ relative to $\mathcal{U}^s$. 

\begin{Theorem}\label{th:theorem3}
The space $(\mathcal{P}^s,\mathcal{T}_{P^s})$ is Hausdorff and totally path-disconnected.
\end{Theorem}

\indent Together, Theorems \ref{th:theorem2} and \ref{th:theorem3} present a fundamental trade-off. Whereas in some applications it is natural to consider non-strict preferences, this implies that any topology that universally preserves continuous mappings cannot be Hausdorff. Conversely, whereas the Hausdorff property is a natural property to ask of a topology, it comes at the cost of considering only strict preference spaces. A consequence of these observations is that topological conditions on preference spaces carry substantive behavioral assumptions about the indifference structures allowed in the preference space. 

\indent We now determine if the restriction from $\mathcal{P}$ to $\mathcal{P}^s$ is necessary or if a weaker condition on $\mathcal{P}$ still ensures that the Hausdorff property holds. Proposition \ref{prop:proposition1} shows the set of strict preferences is the largest set that is both Hausdorff and includes all strict preferences. Therefore, the restriction to $\mathcal{P}^s$ is the weakest restriction that preserves the Hausdorff property without excluding any strict preferences.

\begin{Proposition}\label{prop:proposition1}
Let $\mathcal{P}^0$ be be a preference space that is Hausdorff and satisfies that $\mathcal{P}^s\subset\mathcal{P}^0$. Then, $\mathcal{P}^0=\mathcal{P}^s$.
\end{Proposition}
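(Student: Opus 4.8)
The claim is that if $\mathcal{P}^0$ is Hausdorff (in its final topology relative to the corresponding set of utilities $\mathcal{U}^0 = F^{-1}(\mathcal{P}^0)$) and $\mathcal{P}^s \subset \mathcal{P}^0$, then in fact $\mathcal{P}^0 = \mathcal{P}^s$. The plan is to argue by contraposition: suppose $\mathcal{P}^0 \supsetneq \mathcal{P}^s$, so there is some $\succ_0 \in \mathcal{P}^0$ that is \emph{not} strict, i.e.\ there exist distinct $x^*, y^* \in X$ with $x^* \sim_0 y^*$. I will show this forces a failure of the Hausdorff property, by exhibiting a preference $\succ \in \mathcal{P}^s \subset \mathcal{P}^0$ and showing $\succ_0$ lies in the closure of $\{\succ\}$ — or, more precisely, that $\succ_0$ and some strict $\succ$ cannot be separated by open sets — which is impossible in a Hausdorff space.

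**The construction.**

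Take $u_0 \in F^{-1}(\succ_0)$, so $u_0(x^*) = u_0(y^*)$. I want to build a sequence (or net) of \emph{strict} utility functions $u_n \in \mathcal{U}^s$ with $u_n \to u_0$ pointwise and with $F(u_n) =: \succ_n$ all representing a \emph{single} fixed strict preference $\succ \in \mathcal{P}^s$. This mimics the ``flattening'' constructions used for Theorems \ref{th:theorem1} and \ref{th:theorem2}, run in reverse: instead of collapsing strict rankings to indifference, I perturb an indifference into a fixed strict ranking and let the perturbation shrink to zero. Concretely, fix any strict $u \in \mathcal{U}^s$ and set $u_n = u_0 + \frac{1}{n} u$; for generic choice of $u$ (or after a small further perturbation) each $u_n$ is strict, so $u_n \in \mathcal{U}^s$, and $u_n \to u_0$ pointwise. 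The subtlety is that the $u_n$ need not all represent the \emph{same} strict preference. To fix this, I instead choose $u$ so that $F(u_0 + \frac{1}{n}u)$ is eventually constant in $n$ — e.g.\ by taking $u$ whose induced ordering refines $\succ_0$ on every pair where $u_0$ is tied and which is ``aligned'' with $u_0$ on pairs where $u_0$ already strictly ranks; a lexicographic-type combination $u_n(z) = u_0(z) + \varepsilon_n u(z)$ with $\varepsilon_n \downarrow 0$ small enough does exactly this when $X$ is finite, and a careful net argument handles the general case. Then every $u_n$ represents one fixed strict preference $\succ \in \mathcal{P}^s$.

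**Deriving the contradiction.**

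Since $F : \mathcal{U}^0 \to \mathcal{P}^0$ is continuous (this is built into the definition of the final topology — take $Z = \mathcal{P}^0$, $g = \mathrm{id}$, $h = F$) and $u_n \to u_0$ in $\mathcal{U}^0$, we get $F(u_n) \to F(u_0)$, i.e.\ $\succ \to \succ_0$ along the sequence/net. But $\succ$ is the constant value $F(u_n) = \succ$ for all $n$, so the constant net $(\succ, \succ, \dots)$ converges to $\succ_0$. Because $\succ_0 \neq \succ$ (one is non-strict at the pair $\{x^*,y^*\}$ while $\succ$ strictly ranks that pair), this gives a net with two distinct limits, contradicting Hausdorffness of $\mathcal{P}^0$. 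Hence no non-strict $\succ_0$ can belong to $\mathcal{P}^0$, so $\mathcal{P}^0 \subseteq \mathcal{P}^s$; combined with the hypothesis $\mathcal{P}^s \subseteq \mathcal{P}^0$ this yields $\mathcal{P}^0 = \mathcal{P}^s$.

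**Main obstacle.**

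The routine parts (continuity of $F$, pointwise convergence $u_n \to u_0$, and the net/limit argument) are straightforward. The one genuinely delicate step is guaranteeing that the perturbations $u_n$ can be chosen to \emph{simultaneously} (i) lie in $\mathcal{U}^s$, (ii) represent a \emph{single} strict preference $\succ$ independent of $n$, and (iii) converge to $u_0$ — all without any topological or cardinality assumptions on $X$. For finite or countable $X$ this is an elementary ``perturb by a small lexicographic tie-breaker'' argument; for general $X$ I would phrase it as a net indexed by finite subsets of $X$ together with a scale parameter, choosing on each finite piece a strict refinement of $\succ_0$ and checking that the resulting net of utilities converges pointwise to $u_0$ while each term represents a strict preference agreeing with a fixed strict master-order $\succ$. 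Making this net construction clean — in particular ensuring the ``master'' strict preference $\succ$ is well-defined on all of $X$ — is where the real work lies, and it is the part I would write out most carefully.
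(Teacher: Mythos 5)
Your strategy is exactly the paper's: take a non‑strict $\succ_0\in\mathcal{P}^0$ with $x^*\sim_0 y^*$ and a representation $u_0$, build utility functions $u_n\to u_0$ pointwise that all represent one fixed strict preference $\succ\in\mathcal{P}^s\subset\mathcal{P}^0$, and conclude from continuity of $F$ (which you justify correctly via the universal property) that the constant sequence at $\succ$ converges to $\succ_0\neq\succ$, so the two points cannot be separated. The only substantive difference is in the step you yourself flag as delicate, and your worry there is well founded: the perturbation $u_n=u_0+\varepsilon_n v$ with a generic strict tie‑breaker $v$ does not work as stated, because for a pair $z,z'$ with $u_0(z)>u_0(z')$ the induced ranking is preserved only for $\varepsilon_n$ below a threshold depending on the pair, and when $X$ is infinite these thresholds can accumulate at $0$; no single sequence $\varepsilon_n\downarrow 0$ need make $F(u_n)$ constant (nor even make each $u_n$ injective). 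So as written your argument is complete only for finite $X$, and the ``net indexed by finite subsets'' fix is not actually carried out.

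The paper sidesteps this with a different perturbation that you should adopt: it first reduces (by a ``without loss of generality'' step, justified only in a footnote) to the case where $\{x^*,y^*\}$ is the only nontrivial indifference class, and then sets $u_n(z)=u_0(z)-\frac{2}{n}$ on $\{z:x^*\succ_0 z\}\cup\{x^*\}$ and $u_n(z)=u_0(z)+\frac{2}{n}$ on $\{z:z\succ_0 y^*\}\cup\{y^*\}$, which partition $X$ under the reduction. Shifting the two halves of the order apart by a constant preserves all rankings within and across the halves for every $n$ simultaneously, so $F(u_n)$ is manifestly the same strict preference for all $n$ while $u_n\to u_0$ pointwise. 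This removes the pair‑dependent thresholds entirely; the only remaining issue (common to both your write‑up and the paper's) is the reduction to a single indifferent pair when indifference classes are large or numerous.
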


\indent Because we are interested in providing a universal methodology under which utility functions are a topologically-faithful representation of preferences, the results so far do not presume any topology on the space of alternatives, $X$. In Section \ref{sec:extensions}, we show imposing a preference-independent topology on $X$ may resolve the tension between satisfying the Hausdorff property and path-connectedness.

\section{Imposing topologies on $X$}\label{sec:extensions}

\indent In this section, we study how topological assumptions on $X$ interact with the final topology on $\mathcal{P}$. In particular, we study whether the Hausdorff property can be recovered without restricting attention to strict preferences. 

\indent Whereas the results in Section \ref{sec:results} imply that making $\mathcal{P}$ Hausdorff requires limiting attention to strict preferences, endowing $X$ with a topology generates well-defined notion of \emph{locally strict} preferences (see Definition \ref{def:locally strict} below.) In this section we ask whether the Hausdorff property for preference spaces is guaranteed when restricting attention to locally strict preferences rather than strict preferences.  

\begin{Definition}\label{def:locally strict}
Let $(X,\mathcal{T}_X)$ be a topological space. Let $\succ$ be a preference defined on $(X,\mathcal{T}_X)$. We say $\succ$ is \emph{locally strict} if the following holds: for each $(x,y)\in X\times X$ such that $x\succsim y$ and for every neighborhood $V$ of $(x,y)$, there exists $(x',y')\in V$ such that $x'\succ y'$. \footnote{ See also Border and Segal \cite{border1994dynamic}.}
\end{Definition}

\begin{Remark}
Note the definition of locally strict preference is effective only when $x\sim y$. Otherwise, $x'=x$ and $y'=y$ always satisfy the definition.
\end{Remark}

\indent There are two natural ways in which we can impose a topology on $X$. First, we may assume $X$ does not have an exogenously given topology, and instead impose that, for each $\succ\in\mathcal{P}$, the topology on $X$ is the topology generated by the upper and lower contour sets of $\succ$, (see Definition \ref{def:upperlower}); we denote this topology by $\mathcal{T}_X(\succ)$. This approach generates a family of topological spaces, $(X,\mathcal{T}_X(\succ))_{\succ\in\mathcal{P}}$, where all preferences are continuous by definition and all topological properties of $X$ are driven by the decision problem under consideration. Second, we may assume $X$ comes endowed with an exogenously given topology; for example, if $X=\mathbb{R}^2$, we may endow $X$ with the topology induced by the Euclidean distance. Then, to guarantee that utility representations exist, we restrict attention to preferences that are continuous with respect to the given topology on $X$, i.e., preferences such that the upper and lower contour sets are open sets in the  topology of $X$. Restricting the space of preferences in this way is necessary because the topological properties of $X$ are independent of the decision problems being analyzed, so some preferences may not admit a utility representation. 

\indent We now analyze the final topology on $\mathcal{P}$ for both possibilities described above. 

\subsection{Endowing $X$ with the topology $\mathcal{T}_X(\succ)$ for each $\succ\in\mathcal{P}$}\label{sec:subjectivetopology}

\indent To formally define the topological space $(X,\mathcal{T}_X(\succ))$, we first define \emph{upper and lower contour sets}:\begin{Definition}\label{def:upperlower}
Let $X$ be a set, and let $\succ$ be a preference on $X$. We define the lower and upper contour sets as follows:
\begin{itemize}
\item[1.] $Lower (x)=\{y:x\succ y\}$,
\item[2.] $Upper(x)=\{y:y\succ x\}$.
\end{itemize}
\end{Definition}
Armed with this definition, we can define the topology $\mathcal{T}_X(\succ)$ as the topology on $X$ generated by the lower and upper contour sets of a give preference, $\succ$. That is, a set $G\subset X$ is open if and only if it can be written as a union of finite intersections of upper and/or lower contour sets.

\indent To motivate this section's choice of topology for $X$, consider the following simple example:

\begin{example}\label{example:pichun}
A decision maker chooses a bundle in $\mathbb{R}^2$ and considers the goods to be perfect substitutes. That is, they maximize the function $u(x_1,x_2)=x_1+x_2$. Suppose their wealth is $10$ and the prices of goods $1$ and $2$ are $p_1=1$ and $p_2=2$, respectively. Then, their optimal consumption is $(x_1,x_2)=(10,0)$. Suppose now the price of good 2 falls over time, following the pattern $p_2(t)=2-\frac{t}{t+1}$, with $t\in\mathbb{N}$. Because $p_2(t)>p_1(t)$ for all $t$, the optimal sequence of consumptions for this decision maker is $(x_1,x_2)(t)=(10,0)$ for all $t$. However, when $p_1=p_2=1$, $(x^*_1,x^*_2)=(0,10)$ is an optimal consumption bundle. The question is whether we should consider $(x^*_1,x^*_2)=(0,10)$ a valid limit point of the constant sequence $(10,0)_{t\in\mathbb{N}}$.
\end{example}

\indent One way to look at Example \ref{example:pichun} is to ask what the purpose of a topology is. One answer is that a topology on a space $X$ determines which functions involving $X$ are continuous. An economically meaningful requirement is that optimal decisions should be continuous. That is, if an alternative $x\in X$ is preferred to an alternative $y\in X$, then we should pick a topology on $X$ such that $x$ is still preferred to any $\hat{y}$ that is ``sufficiently close" to $y$, where ``sufficiently close" is determined by the decision maker's preferences. The topology $\mathcal{T}_X(\succ)$ is the smallest topology that guarantees continuity of a decision maker's choices. In the context of Example \ref{example:pichun}, this means that $(0,10)$ is indeed a valid limit to the constant sequence $(10,0)$: according to the decision maker's perception of $\mathbb{R}^2$, $(10,0)$ and $(0,10)$ are topologically indistinguishable because they share the same upper and lower contour sets, so any sequence that converges to the first converges to the second.

\indent Motivated by Example \ref{example:pichun}, rather than studying a fixed topology on $X$, this section studies the family of topological spaces, $(X,\mathcal{T}_X(\succ))_{\succ\in\mathcal{P}}$. 

\indent For the family of topological spaces, $(X,\mathcal{T}_X(\succ))_{\succ\in\mathcal{P}}$, locally strict preferences are characterized by the topological properties of their indifference classes. Specifically, for $\succ\in\mathcal{P}$ and $X$ endowed with the topology $\mathcal{T}_X(\succ)$, $\succ$ is locally strict in the space $(X, \mathcal{T}_X(\succ))$ if and only if no subset of any indifference class of $\succ$ is open. More formally, for any alternative $x\in X$, with $<x>$ denoting the indifference class of $x$, $\succ$ is locally strict if and only if no subset $G\subset<x>$ exists such that $G$ is open in $\mathcal{T}_X(\succ)$. The set of all such preferences is denoted by $\mathcal{P}^{ci}=\{\succ\in\mathcal{P}:\:(\forall x\in X), \: (\forall G\subset <x>),\:\text{ $G$ is not open}\}$. 

\indent Lemma \ref{lemma:locally strict} in the Appendix shows that every preference in $\mathcal{P}^{ci}$ is locally strict in the topological space $(X,\mathcal{T}_X(\succ))$. Moreover, if $\succ\notin \mathcal{P}^{ci}$ then $\succ$ is not locally strict in $(X,\mathcal{T}_X(\succ))$. If $\succ\notin \mathcal{P}^{ci}$, then $x\in X$ and $G\subset X$ exist such that $G\subset <x>$ and $G$ is open. For $(x,x)\in X\times X$, $G\times G$ is a neighborhood of $(x,x)$ where no pair $(z,z')\in G\times G$ satisfies $z\succ z'$, thus proving $\succ$ is not locally strict. In summary, a preference $\succ$ is locally strict relative to $(X,\mathcal{T}_X(\succ))$ if and only if $\succ\in\mathcal{P}^{ci}$.\footnote{For a more concrete example, consider the following: $X=\mathbb{R}$ and preferences $\succ$ are as follows: $1\sim 0\succ z$ for all $z\neq 0,1$. In this case $T_X(\succ)=\{\emptyset, X, \{0,1\}, X\setminus\{0,1\}\}$. Then, it is impossible for $\succ$ to be locally strict. This is because the set $\{0,1\}$ is both an indifference class, and it is open.}

\indent Proposition \ref{prop:proposition3} shows that for finite $X$, $\mathcal{P}^{ci}$ is empty, and for infinite $X$, $\mathcal{P}^{ci}$ is not Hausdorff. Therefore, restricting attention to locally strict preferences is not enough to recover the Hausdorff property, at least not for the topology we imposed on $X$.

\begin{Proposition}\label{prop:proposition3}
If $X$ is finite, then $\mathcal{P}^{ci}=\emptyset$. If $X$ is infinite, then $\mathcal{P}^{ci}$ is not Hausdorff in the final topology.
\end{Proposition}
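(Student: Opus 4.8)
The plan is to prove the two assertions separately; the infinite case will reuse the ``flattening'' construction behind Theorems~\ref{th:theorem1} and~\ref{th:theorem2}. For the finite case, fix an arbitrary $\succ\in\mathcal{P}$. Since $X$ is finite and $\succ$ is a total preorder (it is represented by some $u\in\mathcal{U}$), $\succ$ has finitely many indifference classes, and in particular a top class $M$, the set of $\succ$-maximal elements. If $\succ$ has a single indifference class then $M=X$, which is open; if $\succ$ has at least two classes then, fixing any $z$ in the second-highest class, $M=\{x:x\succ z\}$, a subbasic element of $\mathcal{T}_X(\succ)$ and hence open. In either case $M$ is a nonempty open subset of the indifference class $M$, so $\succ\notin\mathcal{P}^{ci}$ by definition. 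As $\succ$ was arbitrary, $\mathcal{P}^{ci}=\emptyset$.

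For the infinite case, the first step is a structural observation: for $u\in\mathcal{U}$ and $\succ=F(u)$, the topology $\mathcal{T}_X(\succ)$ equals the pullback $\{u^{-1}(W):W\text{ open in the order topology of }u(X)\}$, because the upper and lower contour sets of $\succ$ are precisely the $u$-preimages of the order rays of $u(X)$, and preimages commute with unions and finite intersections. Since $u$ surjects onto $u(X)$, a nonempty open subset of $X$ can lie inside the indifference class $u^{-1}(u(x))$ only if $\{u(x)\}$ is open in $u(X)$; hence $F(u)\in\mathcal{P}^{ci}$ whenever $u(X)$ has no isolated points. Now, because $X$ is infinite there is a surjection $u:X\to\mathbb{Q}$ (pick a countably infinite subset of $X$, enumerate $\mathbb{Q}$ along it, and send the remaining alternatives to a fixed rational). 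As $\mathbb{Q}$ has no isolated points, $\succ:=F(u)\in\mathcal{P}^{ci}$; in particular $\mathcal{P}^{ci}\neq\emptyset$, so the claim is not vacuous. Let $\hat u:=\min(u,0)$ and $\hat\succ:=F(\hat u)$; since $\hat u(X)=(-\infty,0]\cap\mathbb{Q}$ also has no isolated points in its order topology, $\hat\succ\in\mathcal{P}^{ci}$, and $\hat\succ\neq\succ$ because the alternatives $u$ sends to $1$ and to $2$ are strictly ranked by $\succ$ but indifferent under $\hat\succ$.

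The second step flattens $u$ toward $\hat u$. Define $g_n(t)=t$ for $t\le 0$ and $g_n(t)=t/n$ for $t\ge 0$; this is continuous and strictly increasing on $\mathbb{R}$, so $u_n:=g_n\circ u$ represents the same preference $\succ$ as $u$, whence $u_n\in\mathcal{U}^{ci}:=F^{-1}(\mathcal{P}^{ci})$, while $u_n\to\hat u$ pointwise. Since $F$ is continuous (apply Definition~\ref{def:final topology} with $g$ the identity on $\mathcal{P}^{ci}$, forcing $h=F$ to be continuous) and restricts to a continuous map $\mathcal{U}^{ci}\to\mathcal{P}^{ci}$, the constant sequence $F(u_n)=\succ$ converges to $F(\hat u)=\hat\succ\neq\succ$ in $\mathcal{P}^{ci}$. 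A constant sequence in a Hausdorff space converges only to its value, so $\mathcal{P}^{ci}$ cannot be Hausdorff.

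The routine parts are checking that $g_n$ is strictly increasing, that $u_n\to\hat u$ pointwise, and that $\mathbb{Q}$ and $(-\infty,0]\cap\mathbb{Q}$ are dense in themselves. The step that deserves the most care — the main obstacle — is verifying that the two constructed preferences genuinely belong to $\mathcal{P}^{ci}$: this rests on identifying $\mathcal{T}_X(\succ)$ with the pullback of the order topology of $u(X)$ and on the observation that, by surjectivity of $u$ onto $u(X)$, a nonempty open set can sit inside an indifference class only when the corresponding utility value is isolated in $u(X)$. One should also confirm that the final topology on $\mathcal{P}^{ci}$ relative to $\mathcal{U}^{ci}$ really does make the representation map continuous, which again follows from Definition~\ref{def:final topology}.
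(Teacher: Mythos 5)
Your proof is correct. The finite case coincides with the paper's argument: the top indifference class is the upper contour set of any element of the second‑highest class, hence open in $\mathcal{T}_X(\succ)$, so no preference over a finite $X$ lies in $\mathcal{P}^{ci}$. For the infinite case, both proofs run on the same engine --- a sequence of order‑preserving rescalings $u_n$ of some $u\in F^{-1}(\mathcal{P}^{ci})$, all representing the same $\succ$, converging pointwise to a $\hat u$ with $F(\hat u)\neq\succ$ but $F(\hat u)\in\mathcal{P}^{ci}$, so the constant sequence acquires a foreign limit and Hausdorffness fails --- but you instantiate it quite differently. The paper starts from an \emph{arbitrary} $\succ\in\mathcal{P}^{ci}$, selects a non‑extremal point $x$, flattens one of its contour sets, and then needs a three‑case analysis (on whether $x$ has an adjacent indifference class above or below) to certify that the limiting preference is still in $\mathcal{P}^{ci}$. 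You instead exhibit explicit witnesses --- a surjection $u:X\to\mathbb{Q}$ and its truncation $\min(u,0)$ --- certified by the clean structural fact that $\mathcal{T}_X(F(u))$ is the pullback under $u$ of the order topology on $u(X)$, so that $F(u)\in\mathcal{P}^{ci}$ whenever $u(X)$ has no isolated points. This buys two things: it replaces the paper's case analysis with a single reusable criterion, and, more substantively, it establishes that $\mathcal{P}^{ci}\neq\emptyset$ when $X$ is infinite --- a fact the paper's proof silently presupposes, yet one that is needed for the conclusion, since the empty space is (vacuously) Hausdorff. The only costs are a harmless appeal to choice to extract a countably infinite subset of $X$ and the small bookkeeping of checking that $\mathbb{Q}$ and $\mathbb{Q}\cap(-\infty,0]$ have no isolated points in their order topologies, which you do.
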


\indent Example \ref{example:intuition} illustrates the intuition behind the proof:

\begin{example}\label{example:intuition}
Consider $X=\mathbb{R}$ and the sequence of utility functions, $u_n$, representing the ``more is better" preference, $\succ$.
\[
u_n(x)=
 \begin{cases} 
      \frac{1}{n}x + (1-\frac{1}{n})10 & 10 \geq x \\
      x & x>10
   \end{cases}.
\]
Notice that $\succ$ is locally strict since no subset of an indifference class is open in $\mathcal{T}_X(\succ)$. This sequence of utilities (and, therefore, the corresponding preference) converge to a preference, $\hat{\succ}$,  represented by the following utility function:
\[
u_n(x)=
 \begin{cases} 
      10 & 10 \geq x \\
      x & x>10
   \end{cases}.
\]
Notice that $\hat{\succ}$ is locally strict because no subset of an indifference class is open in $\mathcal{T}_X(\hat{\succ})$ but $\hat{\succ}\neq \succ$. Therefore, the constant sequence $(\succ)_{n\in\mathbb{N}}$ converges to $\hat{\succ}$, showing that the Hausdorff property fails.
\end{example}

\indent Combining Theorem \ref{th:theorem3} and Proposition \ref{prop:proposition3} suggests that recovering the Hausdorff property for non-strict preferences requires endowing $X$ with a preference-independent topology, and using this exogenous topology to define local strictness. Indeed, consider preference $\hat{\succ}$ in Example \ref{example:intuition}. Suppose that, instead of using $\mathcal{T}_X(\hat{\succ})$ as the topology for $\mathbb{R}$, we use the Euclidean topology for $\mathbb{R}$. Then, $\hat{\succ}$ is no longer locally strict. Indeed, $1\hat{\sim} 2$, and $N_1=(.75,1.25)$, $N_2=(1.75, 2.25)$ are neighborhoods of $1$ and $2$ respectively, such that $x\sim y$ for each $x\in N_1$ and $y\in N_2$. This suggests that the topology chosen for $X$ drives the negative result in Proposition \ref{prop:proposition3}. Consequently, we now ask whether the Hausdorff property may be recovered for the space of locally strict preferences if we endow $X$ with a fixed, preference-independent topology $\mathcal{T}_X$.

\subsection{Endowing $X$ with a preference-independent topology, $\mathcal{T}_X$}\label{sec:topology on X}

\indent This section examines if a fixed, preference-independent topology on $X$ ensures the Hausdorff property for the set of continuous and locally strict preferences, $\mathcal{P}^{cls}$. Example \ref{example:gap} demonstrates $\mathcal{P}^{cls}$'s failure to be Hausdorff. Theorem \ref{th:theorem4} proves an additional condition is necessary and sufficient for the Hausdorff property.

\indent To make the results in this section comparable those in Proposition \ref{prop:proposition3}, we limit our discussion to locally strict preferences. Additionally, we must ensure that preferences have a utility representation. Following Debreu \cite{debreu1954representation} and Rader \cite{rader1963existence}, we assume $(X, \mathcal{T}_X)$ is second countable topological space, and focus on \emph{continuous} preferences--those with open upper and lower contour sets (see definition \ref{def:continuous preference} below). 

\begin{Definition}\label{def:continuous preference}
Let $(X,\mathcal{T}_X)$ be a topological space, and let $\succ$ be a preference on $X$. We say $\succ$ is \emph{continuous} if for all $x\in X$ the following are true:
\begin{itemize}
\item[1.] $Lower (x)=\{y:x\succ y\}$ is open in $\mathcal{T}_X$.
\item[2.] $Upper(x)=\{y:y\succ x\}$ is open in $\mathcal{T}_X$.
\end{itemize}
\end{Definition}

\indent We first note that a preference $\succ\in\mathcal{P}^{cls}$ might have a \emph{gap}. We say a pair of alternatives $x,y\in X$ is a \emph{gap} if $x\succ y$ and $\{w:x\succ w\succ y\}=\emptyset$. Example \ref{example:gap} shows that $\mathcal{P}^{cls}$ is not Hausdorff because preferences may have gaps. 

\begin{example}\label{example:gap}
Let $X=[0,1]\cup[0,2]$ endowed with the subspace topology. That is, a set $G$ is open in $X$ if it can be written as $G=U\cap X$ where $U$ is open in $\mathbb{R}$.
Consider the following two preferences:
\begin{itemize}
\item $\succ$ is the ``more is better" preference, defined as $x\succ y\:\Leftrightarrow \: x>y$,
\item $\hat{\succ}$ is the ``more is better" preference with the exception that $1\:\hat{\sim}\:2$. Formally, $\hat{\succ}=\succ\setminus\{(1,2)\}$.
\end{itemize}
Clearly both preferences are continuous and locally strict. Furthermore, because $x\:\hat{\succ}\:y\Rightarrow x\succ y$, then $\succ\in N$ for any neighborhood $N$ of $\hat{\succ}$. Therefore, $\mathcal{P}^{cls}$ is not Hausdorff.
\end{example}

\indent Example \ref{example:gap} shows that excluding preferences with gaps is a necessary condition to recover the Hausdorff property. Theorem \ref{th:theorem4} below shows that it is also sufficient. 

\begin{Theorem}\label{th:theorem4}
Let $\mathcal{P}^{clsg}$ be the set of continuous, locally strict preference that have no gaps. $\mathcal{P}^{clsg}$ is Hausdorff in the final topology.
\end{Theorem}

\indent Together, Example \ref{example:gap} and Theorem \ref{th:theorem4} show that $\mathcal{P}^{clsg}$ is the largest Hausdorff subset of $\mathcal{P}^{cls}$. Building on Example \ref{example:intuition}, Example \ref{example:intuition2} illustrates the role that a preference-independent topology on $X$ plays in recovering the Hausdorff property.

\begin{example}\label{example:intuition2}
Consider $X=\mathbb{R}$ \emph{endowed with the Euclidean topology}, and the sequence of utility functions, $u_n$, representing the ``more is better" preference, $\succ$.
\[
u_n(x)=
 \begin{cases} 
      \frac{1}{n}x + (1-\frac{1}{n})10 & 10 \geq x \\
      x & x>10
   \end{cases}.
\]
Notice that $\succ$ is locally strict in the Euclidean topology. This sequence of utilities (and, therefore, the corresponding preferences) converge to a preference, $\hat{\succ}$,  represented by the following utility function:
\[
u_n(x)=
 \begin{cases} 
      10 & 10 \geq x \\
      x & x>10 
   \end{cases}.
\]

\indent Contrary to Example \ref{example:intuition}, $\hat{\succ}$ is \emph{not} locally strict: in the Euclidean topology, neighborhoods of $x=1$ and $y=2$ exist where every point in the neighborhood of $1$ is indifferent to every point in the neighborhood of $2$. Because the topology on $X$ is not given by the upper and lower contour sets of $\hat{\succ}$, but by the Euclidean topology, this example's notion of a ``neighborhood of $(x,y)=(1,2)$" differs from Example \ref{example:intuition}'s notion of neighborhood. Whereas $\hat{\succ}$ is locally strict in the space $(\mathbb{R},\mathcal{T}_\mathbb{R}(\hat{\succ}))$, it is not locally strict in the space $(\mathbb{R}, \Vert\cdot\Vert)$, where $\Vert\cdot\Vert$ denotes the Euclidean topology for $\mathbb{R}$. Therefore, $\hat{\succ}$ is not an admissible limit point of the constant sequence $(\succ)_{n\in\mathbb{N}}$.  
\end{example}

\indent Contrary to Theorem \ref{th:theorem3} and Proposition \ref{prop:proposition1}, Theorem \ref{th:theorem4} shows the space of continuous, locally strict preferences with no gaps is Hausdorff, despite $\mathcal{P}^{clsg}\supset\mathcal{P}^s$. Together, Theorem \ref{th:theorem3} and Proposition \ref{prop:proposition1} link the topological properties of preference spaces with economic assumptions on indifference curves: preference spaces are Hausdorff only if they exclude preferences with indifferences. Theorem \ref{th:theorem4} extends this analysis, connecting the Hausdorff property to both the economic assumptions on indifference curves \emph{and} the topological properties of the alternatives space. Unlike Theorem \ref{th:theorem3}, Theorem \ref{th:theorem4} only requires preferences to be strict in a \emph{local} sense rather than a global one. Because the definition of ``local" varies with the topology imposed on $X$, the topology chosen for $X$ plays a crucial role on the topological properties of the preference space.

\section{Conclusions}\label{sec:conclusions}

\indent In this paper we look at final topologies for preference spaces. Defined by a universal isomorphism, final topologies ensure that, across any model and its outputs, the continuous mappings from preferences to outputs and from utilities to outputs are isomorphic. Final topologies are important because the analysis of a model's continuity properties typically occurs in the space of utility functions, not the preference space itself. 

\indent Our main result, Theorem \ref{th:theorem2}, characterizes the basis of the final topology. Theorem \ref{th:theorem3} and Proposition \ref{prop:proposition1} then study the topological properties of the preference space. A trade-off exists between methodological generality, intuitive topological properties of the preference space, and the structure of indifference curves. Having a Hausdorff preference space that also allows for non-strict preferences proves impossible: the set of strict preferences is the largest set that is both Hausdorff and allows for all strict preferences.  Similarly, we find that the Hausdorff property and path-connectedness--two desirable properties for the preference space to have--are mutually exclusive.

\indent To circumvent these trade-offs, we can impose an exogenous topology on $X$ and focus on continuous, \emph{locally} strict preferences with no gaps. This approach ensures the final topology on preference spaces is Hausdorff and accommodates non-strict preferences, as evidenced by Theorem \ref{th:theorem4}. Consequently, specific topological assumptions on the space of alternatives have a strong impact on the final topology on preferences. 

\indent In summary, this paper comprehensively studies the topological properties of a preference space as a function of the allowable indifferences and topological assumptions on the set of alternatives. Understanding these is essential for interpreting a model's continuity with respect to utilities as indicative of continuity in the underlying preferences.

\bibliographystyle{abbrv}
\bibliography{topologiesbib}

\appendix
\section{Proofs}

We start with a well known result. Under the final topology, $U$ is open in $\mathcal{P}$ if and only if $F^{-1}(U)$ is open in $\mathcal{U}$. Consequently, $F$ is continuous. Proofs are included for completeness.

\begin{Lemma}
$U$ is open in $\mathcal{P}$ if and only if $F^{-1}(U)$ is open in $\mathcal{U}$. As a consequence, $F$ is continuous.
\end{Lemma}
\begin{proof}
$(\Leftarrow)$ Let $U\subset\mathcal{P}$ be open. Let $g:\mathcal{P}\rightarrow\mathcal{P}$ be the identity function, so $g$ is continuous by definition. By definition, continuity of $g$ implies continuity of $(F\circ g)$. Because $(F\circ g)=F$, then $F$ is continuous. Thus, $F^{-1}(U)$ is open. \\
$(\Rightarrow)$ Let $U\subset\mathcal{P}$ and assume $F^{-1}(U)$ is open. Let $Z=\{0,1\}$ and let $\tau_Z=\{Z,\emptyset, \{1\}\}$ be the open sets in $Z$. Define $g:\mathcal{P}\rightarrow Z$ as $g(\succ)=1$ if and only if $\succ\in U$. Then $F\circ g$ is continuous, so that $g$ must be continuous. Therefore, $U=g^{-1}(1)$ is open.
\end{proof}

We now start characterizing the open sets in $\mathcal{P}$. We introduce two pieces of notation. First, for any preference, $\hat{\succ}$, and any set, $S\subset X$, we let $\hat{\succ}\vert S$ denote the restriction of $\hat{\succ}$ to $S$. Formally, $\hat{\succ}\vert S=\{(x,y)\in S\times S: x\hat{\succ} y\}$. Second, let $\succ\in\mathcal{P}$ and let $A\subset X$ such that for each pair $x,y\in A$ either $x\succ y$ or $y\succ x$. Then, we let $B(\succ,A)=\{\hat{\succ}: \hat{\succ}\vert A = \succ\vert A\}$.

\begin{Lemma}\label{lemma:Bopen}
Let $\succ\in\mathcal{P}$ and let $x,y\in X$ be such that $x\succ y$. Then, the set $B(\succ,\{x,y\})$ is open in $\mathcal{P}$.
\end{Lemma}
\begin{proof}
Pick any $\succ\in\mathcal{P}$ and $x,y\in A$ as in the statement of the lemma.
It suffices to show that $F^{-1}(B(\succ,\{x,y\}))$ is open in $\mathcal{U}$.
Notice that $F^{-1}(B(\succ,\{x,y\}))=\cup_{r\in\mathbb{R}} A(r)$ where $A(r)=\{u\in\mathcal{U}: u(x)>r>u(y)\}$.
Because $A(r)$ is open in $\mathcal{U}$ for each $r\in\mathbb{R}$, then $\cup_{r\in\mathbb{R}} A(r)$ is open in $\mathcal{U}$.
Indeed, $A(r)\equiv \Pi_{z\in X}U_z$ where $U_z=\mathbb{R}$ when $z\notin\{x,y\}$, $U_x=(r,\infty)$ and $U_y=(-\infty, r)$, which is open in the topology of pointwise convergence.
Therefore, $F^{-1}{B(\succ,\{x,y\})}$ is open in $\mathcal{U}$ and so $B(\succ,\{x,y\})$ is open in $\mathcal{P}$. 
\end{proof}

We now use Lemma \ref{lemma:Bopen} to show that $F$ is an open map. To do this, we show $F$ maps the basis of $\mathcal{U}$ into open sets in $\mathcal{P}$.

\begin{Lemma}\label{lemma:Fopen}
Let $U$ be an element in the basis of $\mathcal{U}$. Then, $F(U)$ is open in $\mathcal{P}$.
\end{Lemma}
\begin{proof}
Let $U$ be an element in the basis of $\mathcal{U}$.
Then, there exists a finite set $A\subset X$ such that $U=\Pi_{x\in X}I(x)$ where, for each $x\in X$, $I(x)\subset \mathbb{R}$ is open, $I(x)=\mathbb{R}$ if $x\notin A$, and $I(x)\neq\mathbb{R}$ if $x\in A$.
Furthermore, without loss of generality, $I(x)$ is an interval for each $x\in A$.
Therefore, for each $x\in A$ there are numbers $a_x,b_x\in\mathbb{R}$ such that $I_x=(a_x,b_x)$.
\\ \textbf{Case 1: $\cap_{x\in A} I(x)\neq\emptyset$.}
We will show that $F(U)=\mathcal{P}$, which is open and hence completes the proof.
Take any $\succ\in \mathcal{P}$. 
Let $(a,b)\subset \mathbb{R}$ be such that $(a,b)\subset \cap_{x\in A} I(x)$.
Such an interval exists because $\cap_{x\in A} I(x)$ is non empty and open.
Let $u\in F^{-1}(\succ)$. 
Then there exists a scaling $f:\mathbb{R}\rightarrow\mathbb{R}$ such that $f(u)(x)\in (a,b)$ for all $x\in A$ and $F(f(u))=F(u)=\succ$.
Because $f(u)\in U$ by construction and $F(f(u))=\succ$ then $\succ\in F(U)$.
Because $\succ\in\mathcal{P}$ was arbitrarily chosen this shows $\mathcal{P}\subset F(U)$, and this completes the proof.
\\ \textbf{Case 2: $\cap_{x\in A} I(x)=\emptyset$.}
Because $\cap_{x\in A} I(x)=\emptyset$ there exists $x_0,y_0\in A$ such that $I(x_0)\cap I(y_0)=\emptyset$.
Let $\mathcal{A}=\{\{x,y\}: I(x)\cap I(y)=\emptyset\}$.
Because $\{x_0,y_0\}\in\mathcal{A}$ then $\mathcal{A}\neq\emptyset$.
Thus, $(\forall u,v \in U)$ $(\forall \{x,y\}\in\mathcal{A})$, $u(x)\neq u(y)$, $v(x)\neq v(y)$ and $u(x) > u(y)\Leftrightarrow v(x)>v(y)$.
This follows from $u(x)\in I(x)$, $u(y)\in I(y)$ and $I(x)\cap I(y)=\emptyset$.
Let $\triangleright\in\mathcal{P}$ be any preferences such that, for each $\{x,y\}\in\mathcal{A}$, $x\triangleright y$ if and only if $I(x)>I(y)$ in the strong set order.\footnote{ To constructively build such a preference, for each pair $\{x,y\}\in\mathcal{A}$ we say that $x\triangleright y$ iff $I(x)>I(y)$ in the strong set order. By construction this relation is transitive: if $x\triangleright y$ and $y\triangleright z$ then (i) $\{x,y\}\in\mathcal{A}$, (ii) $\{y,z\}\in\mathcal{A}$ and (iii) $I(x) > I(y) >I(z)$. This is a transitive ranking of a finite set of pairs, so it can be extended to all of $X\times X$ as follows: (i) if $x,y\in A$ but $\{x,y\}\notin\mathcal{A}$ then neither $x\triangleright y$ nor $y\triangleright x$ (ii) if $x\notin A$ and $y\in A$, $y\triangleright x$ and (iii) if $x,y\notin A$, then neither $x\triangleright y$ nor $y\triangleright x$.} 
In particular, for all $\{x,y\}\in\mathcal{A}$, $x\triangleright y$ iff $u(x)>u(y)$ for all $u\in F(U)$.
Let $B(\triangleright,\mathcal{A})=\cap_{\{x,y\}\in\mathcal{A}}B(\triangleright,\{x,y\})$.
$B(\triangleright,\mathcal{A})$ is open because each $B(\triangleright,\{x,y\})$ is open and the intersection is finite.
We now show that $F(U)=B(\triangleright,\mathcal{A})$ which proves that $F(U)$ is open.\\ 
\underline{\emph{$F(U)\subset B(\triangleright,\mathcal{A})$:}} 
Let $\succ\in F(U)$.
Then there is $u\in U$ such that $F(u)=\succ$.
Because $u\in U$, then $F(u)\vert\{x,y\} = \triangleright\vert\{x,y\}$ for each $\{x,y\}\in\mathcal{A}$.
Thus, $\succ\vert\{x,y\} = \triangleright\vert\{x,y\}$ and thus $\succ\in B(\triangleright,\mathcal{A})$.
\\ \underline{\emph{$\succ\in B(\triangleright,\mathcal{A})\subset F(U)$:}} 
Pick $\succ\in B(\triangleright,\mathcal{A})$.
Thus $\succ\vert\{x,y\}=\triangleright\vert\{x,y\}$ for each pair $\{x,y\}\in\mathcal{A}$.
Thus, for any $\{x,y\}\in\mathcal{A}$ and any $u\in F^{-1}(\succ)$, $u(x)>u(y) \Leftrightarrow x\triangleright y$. 
Up to a rescaling, there exists $v\in F^{-1}(\succ)$ such that $v(x)\in I(x)$ for each $x\in A$.
Thus, there exists $v\in U\cap F^{-1}(\succ)$ so that $\succ=F(v)\in F(U)$.
\end{proof}

As a consequence of the previous Lemma, $F$ is an open map.\footnote{If $U$ is open in $\mathcal{U}$ then $U=\cup_{i\in I}D_i $ for some collection of basis elements $(D_i)_{i\in I}$. Then, $F(U)=\cup F(D_i)$. Since each $F(D_i)$ is open by assumption, then $F(U)$ is open, so $F$ maps open sets to open sets.} Consequently, if $\mathcal{D}$ is a basis for the topology of $\mathcal{U}$ then $F(\mathcal{D})$ is a basis for the final topology on $\mathcal{P}$. The previous lemma effectively shows $F(\mathcal{D})=\{B(\succ,\mathcal{A}): \succ\in\mathcal{P}\:,\:\mathcal{A}\in\alpha(\succ)\}$. This last observation proves Theorem \ref{th:theorem2}. For completeness, we include an explicit proof of Theorem \ref{th:theorem2} below.

Part of Lemma \ref{lemma:Fopen}'s proof involves showing that although sets of the form $B(\succ, A)$ generate the basis of $\mathcal{T}_P$, they do not form a basis. Example \ref{example:mathcalA} below shows this in a simple example.

\begin{example}\label{example:mathcalA} 
Suppose $X=\{1,2,3\}$.
The following set is an element of the basis of $\mathbb{R}^X$: $U=I_1\times I_2\times I_3$ where $I_1=(0,1)$, $I_2=(2,5)$, $I_3(2,5)$.
Pick any $u\in U$. 
Then $u$ represents a preference such that $2 \: F(u)\: 1$ and $3 \: F(u)\: 1$.
However, we could have $3 \:F(u) \:2$,  $2 \:F(u) \:3$, or neither.
Thus, pushing forward $U$ according to $F$ recover all preferences $\succ$ such that $2\succ 1$ and $3\succ 1$ but we cannot conclude anything about the ranking of $2$ and $3$.
Hence $F(U)=B(F(u), \{1,2\})\cap B(F(u),\{1,3\})$ where $u\in U$.
However, $B(F(u),\{1,2,3\})=B(F(u), \{1,2\})\cap B(F(u),\{1,3\})\cap B(F(u),\{2,3\}\neq B(F(u), \{1,2\})\cap B(F(u),\{1,3\})=F(U)$ because $U$ could include functions $v$ such that $u$ and $v$ disagree on how to rank elements $2$ and $3$.
For instance $u(1)=0.5$, $u(2)=3$, $u(3)=4$, and $v(1)=0.5$, $v(2)=4$, $v(3)=3$.

\end{example}

\textbf{Proof of Theorem \ref{th:theorem2}}
\begin{proof}
Let $\mathcal{D}$ be the basis for $\mathcal{U}$ generated by products of open intervals. 
That is, each $D\in\mathcal{D}$ is of the form $D=\Pi_x I_x$ where each $I_x$ is an open interval and $I_x\neq \mathbb{R}$ for a finite subset of elements $x\in X$.
We show $\mathcal{B}=F(\mathcal{D})$.
Because $F$ is open and $\mathcal{D}$ is a basis for $\mathcal{U}$, then $\mathcal{B}$ is a basis for $\mathcal{P}$.\newline
\emph{\underline{$\mathcal{B}\subset F(\mathcal{D})$:}}
Let $B(\succ,\mathcal{A})\in\mathcal{B}$, where $\succ\in\mathcal{P}$ and $\mathcal{A}\in\alpha(\succ)$.
Let $u$ be any representation of $\succ$.
Let $A\in\mathcal{A}$, so that $A$ is $\succ$-strictly ranked.
Let $\varepsilon=\min\{\vert u(x)-u(y)\vert: x,y\in A, \: x\neq y\}>0$, where positivity is guaranteed because $A$ is finite and $\succ$-strictly ranked.
For each $x\in A$, let $I_x=(u(x)-\frac{\varepsilon}{2}, u(x)+\frac{\varepsilon}{2})\subset \mathbb{R}$.
For each $x\notin A$, let $I_x= \mathbb{R}$.
Notice that $I_x\cap I_y=\emptyset$ for all $x,y\in A$, $x\neq y$.
Let $D=\Pi_x I_x \in \mathcal{D}$.
Then, $B(\succ, A)=F (D)\in F(\mathcal{D})$ for each $A\in\mathcal{A}$.
Therefore $B(\succ, \mathcal{A})=\cap_{A\in\mathcal{A}}B(\succ,A)\in F(\mathcal{D})$ so $\mathcal{B}\subset F(\mathcal{D})$. \footnote{This follow because products and finite intersections commute. Therefore, the finite intersection of base elements in the product topology is itself a base element.}
\newline

\emph{\underline{$\mathcal{B}\supset F(\mathcal{D})$:}}
Pick any $F(D)$ where $D\in \mathcal{D}$.
Let $A=\{x: I_x\neq \mathbb{R}\}$, which is finite by definition.
First, assume $\cap_{x\in A} I_x\neq \emptyset$.
Then, $\cap_{x\in A} I_x=(a,b)$ for some $a,b\in\mathbb{R}$.
Thus, $F(D)=\mathcal{P}\in \mathcal{B}$.
Second, assume $\cap_{x\in A} I_x= \emptyset$.
Then there exists $A_0\subset A$ such that $\cap_{x\in A_0}I_x=\emptyset$.
Let $\mathcal{A}=\{A_0\subset A: \cap_{x\in A_0}I_x=\emptyset\}$
Therefore, if $v,u\in D$ and $x,y\in A_0$ then $u(x)\neq u(y)$, $v(x)\neq v(y)$ and $u(x)>u(y) \Leftrightarrow v(x)>v(y)$.
That is, any utility function in $D$ ranks the alternatives in each $A_0\in\mathcal{A}$ the same way and no two alternatives receive equal utility.
Then $F(D)=\cap_{A_0\in\mathcal{A}}B(\succ,A_0)$ for all $\succ\in F(D)$.
Therefore, $F(D)\in\mathcal{B}$ for all $D\in\mathcal{D}$ and so $F(\mathcal{D})\subset\mathcal{B}$.
\end{proof}

\subsection{Proof of Theorem \ref{th:theorem3} and Proposition \ref{prop:proposition1}}

\begin{theoremthree}
The space $\mathcal{P}^s$ is Hausdorff and path disconnected.
\end{theoremthree}
\begin{proof}
\indent We first show $\mathcal{P}^s$ is Hausdorff.
Let $\succ_0,\succ_1\in\mathcal{P}^s$ such that $\succ_1\neq\succ_0$.
Then there exists $x,y\in X$ such that $x\succ_0 y$ and $y\succ_1 x$.
Consider the sets $B_0=B(\succ_0,\{x,y\})$ and $B_1=B(\succ_1,\{x,y\})$.
By Theorem \ref{th:theorem2}, $B_0$ and $B_1$ are neighbordoods of $\succ_0$ and $\succ_1$ respectively.
Furthermore, $B_0\cap B_1=\emptyset$. 
 This shows $\mathcal{P}^s$ is Hausdorff.

\indent We now argue that $\mathcal{P}^s$ is totally path disconnected.
It suffices to show that $\mathcal{U}^s$ is totally path-disconnected.
Take $u,v\in\mathcal{U}^s$ to be distinct functions.
Assume there is a continuous path $t$ joining $u$ and $v$.
That is, $t:[0,1]\rightarrow \mathcal{U}^s$ with $t(0)=u$, $t(1)=v$ and $t$ continuous.
Since $u\neq v$ and $u,v\in\mathcal{U}^s$, there are points $x,y\in X$ such that $u(x)>u(y)$ and $v(y)>v(x)$.
Let $\Delta (s)=t(s)(x)-t(s)(y)$ for $s\in [0,1]$.
Then, $\Delta(0)>0>\Delta(1)$.
Thus, there is a point $s^*$ such that $\Delta(s^*)=0$.
Then, $t(s^*)(x)=t(s^*)(y)$, contradicting that $t(s^*)\in\mathcal{U}^s$.
Thus, no two distinct functions are connected via a continuous path.

\end{proof}

\begin{propone}
Let $\mathcal{P}^0$ be Hausdorff and $\mathcal{P}^s\subset\mathcal{P}^0$. Then, $\mathcal{P}^0=\mathcal{P}^s$.
\end{propone}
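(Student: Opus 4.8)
The plan is to show that, under the hypotheses, $\mathcal P^0$ can contain no non-strict preference; since $\mathcal P^s\subseteq\mathcal P^0$ is assumed, this forces $\mathcal P^0=\mathcal P^s$. A preliminary reduction: because $\mathcal P^s\neq\emptyset$, pick $\succ^s\in\mathcal P^s$ and a representation $w$ of it; $w$ is injective, and so is $w/n$, so $F(w/n)=\succ^s$ while $w/n\to\mathbf 0$ pointwise. If $\simeq\in\mathcal P^0$ then $\mathbf 0\in\mathcal U^0:=F^{-1}(\mathcal P^0)$ and, by continuity of $F$, $\simeq=F(\mathbf 0)\in\overline{\{\succ^s\}}$ with $\simeq\neq\succ^s$, contradicting that a Hausdorff space is $T_1$. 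Hence $\mathcal P^s\subseteq\mathcal P^0\subseteq\mathcal P^*$, so $\mathcal U^0$ is invariant under post-composition by strictly increasing maps; the proofs of Lemmas \ref{lemma:Fopen}, \ref{lemma:openmap1} and \ref{lemma:openmap2} then apply verbatim, giving that $F\colon\mathcal U^0\to\mathcal P^0$ is open and that $\{\,B(\succ,A)\cap\mathcal P^0:\ A\subseteq X\text{ finite},\ \succ\vert A\text{ strict}\,\}$ is a basis for $\mathcal P^0$.

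Now suppose, for contradiction, that some $\succ\,\in\mathcal P^0$ is not strict. The key step is to produce a \emph{strict refinement}: a preference $\succ^s$ with $\succ\ \subseteq\ \succ^s$ (as subsets of $X\times X$) and $\succ^s\in\mathcal P^s$ (so $\succ^s\in\mathcal P^0$). Granting this, take any finite $A\subseteq X$ on which $\succ$ is strict. Then $\succ\vert A$ is already a strict linear order on $A$, and $\succ^s$ differs from $\succ$ only by ranking some $\succ$-indifferent pairs, so $\succ^s\vert A=\succ\vert A$, i.e. $\succ^s\in B(\succ,A)$. Thus $\succ^s$ lies in every basic neighbourhood of $\succ$, which means $\succ\in\overline{\{\succ^s\}}$; since $\succ$ is non-strict and $\succ^s$ is strict, $\succ\neq\succ^s$, so $\{\succ^s\}$ is not closed, contradicting the Hausdorff property. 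Therefore every member of $\mathcal P^0$ is strict, i.e. $\mathcal P^0\subseteq\mathcal P^s$, and with the hypothesis $\mathcal P^s\subseteq\mathcal P^0$ we get $\mathcal P^0=\mathcal P^s$.

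The main obstacle is the construction of the strict refinement $\succ^s\in\mathcal P^s$, and this is exactly where the hypothesis $\mathcal P^s\neq\emptyset$ does work: a representation of any element of $\mathcal P^s$ is an injection $X\hookrightarrow\mathbb R$, so every indifference class of $\succ$ has cardinality at most that of $\mathbb R$. Starting from a (necessarily non-injective) representation $u$ of $\succ$, I would perturb $u$ \emph{within each indifference class} to obtain an injective, order-preserving $v\colon X\to\mathbb R$ and set $\succ^s:=F(v)$; since $v$ is injective $\succ^s\in\mathcal P^s$, and since $v$ preserves the strict part of $u$ we get $\succ\subseteq\succ^s$. Carrying out this perturbation so that it respects the global order simultaneously across all classes — rescaling the range of $u$ to open up room around each occupied level and spreading each fibre injectively into that room — is the delicate point and the part of the argument that must be done with care; once it is in place, the neighbourhood bookkeeping above is routine.
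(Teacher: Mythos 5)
Your overall strategy is the paper's own: exhibit a strict preference $\succ^{s}\in\mathcal{P}^{s}\subseteq\mathcal{P}^{0}$ with $\succ\in\overline{\{\succ^{s}\}}$ for a non-strict $\succ\in\mathcal{P}^{0}$, contradicting that Hausdorff spaces are $T_1$ (the paper gets there by writing down an explicit constant sequence of utilities converging to a representation of $\succ$; you get there via the basis of the final topology, which is fine modulo the usual bookkeeping, and your preliminary elimination of $\simeq$ is a nice touch). The problem is the step you yourself flag as delicate: the strict refinement $\succ^{s}\supseteq\succ$ with $\succ^{s}\in\mathcal{P}^{s}$ \emph{need not exist}, and the obstruction is not cardinality but representability. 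Take $X=\mathbb{R}\times\{0,1\}$ and $\succ=F(u)$ with $u(r,i)=r$, so every indifference class is the two-point set $\{(r,0),(r,1)\}$. If $v:X\to\mathbb{R}$ were injective and preserved the strict part of $u$, then the intervals $\bigl(\min\{v(r,0),v(r,1)\},\,\max\{v(r,0),v(r,1)\}\bigr)$ would be uncountably many pairwise disjoint nondegenerate open intervals in $\mathbb{R}$ (for $r<r'$ every value at level $r$ lies strictly below every value at level $r'$), which is impossible since each must contain a distinct rational. So no amount of care in ``opening up room around each occupied level'' can succeed: for this $\succ$ there is simply no strict refinement in $\mathcal{P}$, even though $\mathcal{P}^{s}\neq\emptyset$ and every indifference class has only two elements. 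Your appeal to ``every indifference class has cardinality at most that of $\mathbb{R}$'' does not address this order-theoretic obstruction.

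To be fair, the paper's proof conceals exactly the same step inside its ``assume without loss of generality that this is the only instance of indifference'': breaking one indifference produces a preference that is not known to lie in $\mathcal{P}^{0}$ unless \emph{all} indifferences can be broken simultaneously while retaining a utility representation, which is the same refinement problem. Your argument (and the paper's) does go through whenever such a refinement exists --- in particular for countable $X$, since every linear order on a countable set admits a utility representation, so there you can refine within each class using any enumeration. But as written the proof is incomplete, and the missing step is not merely technical: in the example above one can check that $\mathcal{P}^{s}\cup\{\succ\}$ \emph{is} Hausdorff in the final topology (no strict preference refines $\succ$, so any $\succ'\in\mathcal{P}^{s}$ disagrees with $\succ$ on some strictly ranked pair $\{x,y\}$, and the open sets $\{\hat{\succ}:x\hat{\succ}y\}$ and $\{\hat{\succ}:y\hat{\succ}x\}$ separate them), so the refinement really is the crux and cannot be supplied in general. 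A smaller point: in your preliminary step you need $w$ bounded for $w/n\to\mathbf{0}$ pointwise; compose with $\arctan$ first, as the paper does in the proof of Theorem \ref{th:theorem1}.
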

\begin{proof}
We proceed in two steps.
Recall that $\simeq$ denotes the total indifference preference. 
\\ \emph{\underline{Step 1: if $\simeq\in\mathcal{P}^0$ then $\mathcal{P}^{0}$ is not Hausdorff.}}
By Theorem \ref{th:theorem2}, the only open neighborhood of $\simeq$ is $\mathcal{P}^0$.
Therefore, $\simeq$ is topologically indistinguishable from any other preference, and thus $\mathcal{P}^0$ is not Hausdorff.\\
We may alternatively proceed using limit arguments.
Let $\textbf{0}$ denote constant function $0$. 
Clearly $F(\textbf{0})=\simeq$.
Let $\succ\in\mathcal{P}^s$ be any preference and let $u$ be any representation of $\succ$.
Then $u_n=\frac{1}{n}u$ represents $\succ$.
Since $u_n\rightarrow \textbf{0}$, $F(u_n)=\succ$ for all $n$, $F(\textbf{0})=\simeq$, then $\simeq\in\bar{\{\succ\}}$,so $\mathcal{P}^0$ cannot be Hausdorff.\\
\\ \emph{\underline{Step 2: if $\simeq\notin\mathcal{P}^0$ then $P^{0}$ is not Hausdorff.}}
Take $\succ\in\mathcal{P}^0\setminus\mathcal{P}^s$.
For each $x\in X$, let $<x>$ be the indifference class of $x$ according to $\succ$.
For each $<x>$, let $f_{<x>}:<x>\rightarrow\mathbb{R}$ be any function such that $f(a)\neq f(b)$ for all $a,b\in<x>$.\footnote{Such $f$ exists because $<x>\subset X$ and $\mathcal{P}^s\neq\emptyset$.}
Construct $\succ_0\in \mathcal{P}^s$ as follows: for any pair $x,y\in X$, if $x\succ y$ then $x\succ_0 y$; if $x\sim y$ then $x\succ_0y$ if and only if $f_{<x>}(x)>f_{<x>}(y)$.
Notice that $\succ_0\in\mathcal{P}^s$ and that $\succ_0\vert A=\succ\vert A$ for any finite set $A$ that is $\succ$-strictly ranked.
Let $G$ be any open set such that $\succ\in G$.
Then there is a finite collection $\mathcal{A}$ and a preference $\succ_1\in\mathcal{P}_0$ such that $\succ\in B(\succ_1, \mathcal{A})$. 
Without loss of generality $B(\succ_1, \mathcal{A})=B(\succ, \mathcal{A})$.\footnote{Indeed, $\succ \in B(\succ_1, A)$ for all $A\in \mathcal{A}$ implies $\succ\vert A=\succ_1\vert A$ for all $A\in\mathcal{A}$ and so $B(\succ_1, A)=B(\succ, A)$ for all $A\in \mathcal{A}$.}
By construction, $\succ_0\in B(\succ, \mathcal{A})$. 
Thus $\succ$ and $\succ_0$ are topologically indistinguishable, so $\mathcal{P}_0$ is not Hausdorff.
\end{proof}

\subsection{Proof of Proposition \ref{prop:proposition3} and Theorem \ref{th:theorem4}}

For our next lemma, we start by recalling notation we introduced in the main text. For all $x\in X$, $<x>$ represents the indifference class of $x$. Then, define $\mathcal{P}^{ci}=\{\succ: \:(\forall x\in X)\: if \: A\subset<x>\:\Rightarrow\:\text{$A$ is not open}  \}$.

\begin{Lemma}\label{lemma:locally strict}
Let $\succ\in\mathcal{P}^{ci}$. Then, $\succ$ is locally strict in $(X,\mathcal{T}_X(\succ))$.
\end{Lemma}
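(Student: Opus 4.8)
\textbf{Proof proposal for Lemma \ref{lemma:locally strict}.}

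The plan is to unwind the definitions and show directly that for any pair $(x,y)\in X\times X$ with $y\succsim^* x$, every neighborhood $V$ of $(x,y)$ in the product space $(X,\mathcal{T}_X(\succ^*))\times(X,\mathcal{T}_X(\succ^*))$ contains a pair $(x',y')$ with $x'\succ^* y'$. The only nontrivial case (see the Remark after Definition \ref{def:locally strict}) is when $x\sim^* y$, so assume this. Since the sets of the form $U_1\times U_2$ with $U_1,U_2$ open in $\mathcal{T}_X(\succ^*)$ form a basis for the product topology, it suffices to take $V=U_1\times U_2$ with $x\in U_1$, $y\in U_2$, and produce $(x',y')\in U_1\times U_2$ with $x'\succ^* y'$.

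First I would record what $\mathcal{T}_X(\succ^*)$ looks like: it is generated by the subbasis of all upper contour sets $\mathrm{Upper}(z)=\{w:w\succ^* z\}$ and lower contour sets $\mathrm{Lower}(z)=\{w:z\succ^* w\}$, so a basic open set is a finite intersection of such sets. The key step is then: because $\succ^*\in\mathcal{P}^{ci}$, the open set $U_1$ is \emph{not} contained in the indifference class $<x>$ (if it were, $U_1$ would be an open subset of $<x>$, contradicting membership in $\mathcal{P}^{ci}$, since $U_1\ni x$ is nonempty). Hence there is a point $w\in U_1$ with $w\nsim^* x$, i.e. either $w\succ^* x$ or $x\succ^* w$. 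Symmetrically, $U_2\not\subset <y>=<x>$, so there is $w'\in U_2$ with $w'\nsim^* y$. Now I would argue by cases on the strict rankings of $w$ versus $x$ and $w'$ versus $y$, using $x\sim^* y$ and transitivity of $\succ^*$, to locate the desired strictly-ranked pair inside $U_1\times U_2$: e.g. if $w\succ^* x\sim^* y$ then $(w,y)\in U_1\times U_2$ works; if $x\succ^* w$, then I instead need a point of $U_2$ below $w$ — for this I would use that $\mathrm{Lower}(w)$ is open, contains no point forced to be indifferent issues — more carefully, combine the two chosen points: among $\{x,w\}\subset U_1$ and $\{y,w'\}\subset U_2$ one gets four candidate pairs, and since $x\sim^* y$ while $w,w'$ each break indifference with $x$ (hence with $y$), at least one of the four pairs is strictly ranked in the required direction $U_1$-coordinate over $U_2$-coordinate; if the only strict rankings point the ``wrong'' way, swap the roles, or note $w$ and $w'$ can both be taken on the same side after intersecting $U_1$ (resp. $U_2$) with $\mathrm{Upper}(x)$ or $\mathrm{Lower}(x)$, which is legitimate precisely because $\mathcal{P}^{ci}$ forbids $U_1\cap\mathrm{Upper}(x)$ and $U_1\cap\mathrm{Lower}(x)$ from both being empty.

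The main obstacle I anticipate is the bookkeeping in that final case analysis: ensuring that the point extracted from $U_1$ lies strictly \emph{above} the point extracted from $U_2$ (not merely that they are strictly ranked in some order), which is where $x\sim^* y$ and transitivity must be invoked to ``transport'' a strict comparison with $x$ into a strict comparison between a point of $U_1$ and a point of $U_2$. I would handle this cleanly by first replacing $U_1$ with $U_1\cap\mathrm{Upper}(y)$ if that is nonempty (giving $x'\succ^* y\in U_2$ immediately), and otherwise observing $U_1\subset \overline{\mathrm{Lower}(y)}\cup<y>$-type constraints force $U_1\subset <x>\cup\mathrm{Lower}(y)$; since $U_1\not\subset<x>$ there is then $x'\in U_1$ with $y\succ^* x'$, and by the same dichotomy applied to $U_2$ there is $y'\in U_2$ with $y'\succ^* x$ or $y'\in<x>$ — in either subcase transitivity with $x\sim^* y$ yields $y'\succ^* x'$, i.e. the pair $(x',y')$ witnesses strictness with the coordinates in the roles $(x',y')$ as required after noting the definition is symmetric in the sense that we only need \emph{some} strictly-ranked pair in $V$. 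This completes the argument; the converse direction (that $\succ^*\notin\mathcal{P}^{ci}$ fails local strictness) is already given in the main text and is not needed here.
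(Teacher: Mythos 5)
Your overall route is the same as the paper's: restrict to a basic product neighborhood $U_1\times U_2$ of $(x,y)$ with $x\sim^* y$, and use the defining property of $\mathcal{P}^{ci}$ (no nonempty open subset of an indifference class) to extract points of $U_1$ and $U_2$ that break indifference with $x$. The gap is exactly at the point you flag as the ``main obstacle'': the orientation of the strict pair. Your proposed resolution---that ``the definition is symmetric in the sense that we only need \emph{some} strictly-ranked pair in $V$''---is not licensed by Definition \ref{def:locally strict}, which requires $(x',y')\in V$ with the \emph{first} coordinate strictly preferred. And the residual case you end up in cannot be fixed by reorientation: if $U_1\cap\mathrm{Upper}(y)=\emptyset$ and $U_2\cap\mathrm{Lower}(x)=\emptyset$, then every $a\in U_1$ has $u(a)\le u(x)$ and every $b\in U_2$ has $u(b)\ge u(x)$, so \emph{no} pair $(a,b)\in U_1\times U_2$ satisfies $a\succ^* b$. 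The task in that case is therefore to show the configuration is impossible, not to find a witness inside it; your remark that $\mathcal{P}^{ci}$ forbids $U_1\cap\mathrm{Upper}(x)$ and $U_1\cap\mathrm{Lower}(x)$ from both being empty constrains each $U_i$ separately and does not rule out the cross-configuration ($U_1$ weakly below $x$, $U_2$ weakly above $x$).

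The missing ingredient is that every open set in $\mathcal{T}_X(\succ^*)$ is saturated with respect to indifference: the subbasic sets $\mathrm{Upper}(z)$ and $\mathrm{Lower}(z)$ depend only on utility values, hence so does every open set, so each open set is a union of indifference classes. Consequently $y\in U_2$ and $x\sim^* y$ force $x\in U_2$ (and $y\in U_1$), so $<x>\subset U_1\cap U_2$ and $U_1\cap U_2$ is a nonempty open set. In your residual case $U_1\cap U_2\subset\{w:u(w)\le u(x)\}\cap\{w:u(w)\ge u(x)\}=\,<x>$, contradicting $\succ^*\in\mathcal{P}^{ci}$. With that case eliminated, either $U_1\cap\mathrm{Upper}(y)\ne\emptyset$ (giving $(x',y)$ with $x'\succ^* y$) or $U_2\cap\mathrm{Lower}(x)\ne\emptyset$ (giving $(x,y')$ with $x\succ^* y'$), and the lemma follows. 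For comparison, the paper's proof is shorter: it assumes all pairs in $G_1\times G_2$ are indifferent and concludes $G_1,G_2$ are open subsets of a single indifference class, contradicting $\mathcal{P}^{ci}$; it, too, passes over the orientation point, and the saturation observation above is what completes either argument.
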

\begin{proof}
Let $(x,y)\in X\times X$ be such that $x\succsim y$.
If $x\succ y$, then the locally strict property holds vacuously.
Assume then that $x\sim y$.
Then, let $V$ be a neighborhood of $(x,y)$.
Then, there is an open set $G=G_1\times G_2\subset X\times X$ such that $(x,y)\in G\subset V$.
Assume that for all $(z,z')\in G_1\times G_2$ we had $z\sim z'$.
Because $x\in G_1$, for any $z'\in G_2$ we get $x\sim z'$.
Thus, $G_2\subset <x>$.
By analogous reasoning, $G_1\subset <y>$.
Then, $G_1\subset <y>$ and $G_2\subset <x>$, contradicting that $G$ is open.
Hence, there must exist $(z,z')\in G$ such that $z\succ z'$, completing the proof.
\end{proof}

\begin{proptwo}
If $X$ is finite, then $\mathcal{P}^{ci}=\emptyset$. If $X$ is infinite, then $\mathcal{P}^{ci}$ is not Hausdorff in the final topology.
\end{proptwo}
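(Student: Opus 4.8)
The plan is to prove the two assertions separately. For the first claim, that $X$ finite implies $\mathcal{P}^{ci}=\emptyset$, I would argue as follows. When $X$ is finite, the topology $\mathcal{T}_X(\succ)$ generated by the upper and lower contour sets is a topology on a finite set, so every point has a minimal open neighborhood (the intersection of all contour sets containing it). Take any $\succ\in\mathcal{P}$ and consider a $\succ$-maximal element $x\in X$ (which exists since $X$ is finite and $\succ$ is a strict partial order admitting a utility representation). I would show that the minimal open set containing $x$ is contained in the indifference class $\langle x\rangle$: indeed, any upper contour set $\{z : z\succ w\}$ containing $x$ must, by maximality of $x$, also avoid anything strictly above $x$, and a short argument shows the minimal neighborhood of $x$ consists only of points indifferent to $x$. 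Since this minimal neighborhood is a nonempty open subset of $\langle x\rangle$, the condition defining $\mathcal{P}^{ci}$ fails, so $\mathcal{P}^{ci}=\emptyset$. (One must be a little careful: the right statement is that the minimal open set around a maximal element, or symmetrically around a minimal element, lies inside its indifference class; I expect this to be the one place requiring genuine care.)

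For the second claim, that $X$ infinite implies $\mathcal{P}^{ci}$ is not Hausdorff, I would mimic the flattening construction used for Theorem~\ref{th:theorem1} and Proposition~\ref{prop:proposition1}. Fix any $\succ\in\mathcal{P}^{ci}$ (assuming it is nonempty — if it is empty the statement is vacuous). Pick points $x,c,y\in X$ with $x\succ c\succ y$ (if no such triple exists the space is too degenerate to worry about, and one handles it separately or notes $\mathcal{P}^{ci}$ is empty in that case). Take $u\in F^{-1}(\succ)$ and define, exactly as in the Theorem~\ref{th:theorem2} discussion,
\[
u_n(z)=
\begin{cases}
u(z) & y\succsim z\\
(1-\tfrac{1}{n})u(y)+\tfrac{1}{n}u(z) & x\succ z\succ y\\
u(z) & z\succsim x.
\end{cases}
\]
Then $F(u_n)=\succ$ for every $n$, while $u_n$ converges pointwise to a limit $u_\infty$ whose induced preference $\hat\succ=F(u_\infty)$ collapses the open interval strictly between $y$ and $x$ into an indifference class together with $y$, so $\hat\succ\neq\succ$. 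By continuity of $F$ (Lemma~\ref{lemma:opensets} / the universal property), $\hat\succ\in\overline{\{\succ\}}$ in $\mathcal{T}_{P^*}$, hence in the subspace topology on $\mathcal{P}^{ci}$ provided $\hat\succ\in\mathcal{P}^{ci}$. The real work is to verify that $\hat\succ$ itself lies in $\mathcal{P}^{ci}$, i.e. that no subset of any $\hat\succ$-indifference class is open in $\mathcal{T}_X(\hat\succ)$; this uses that $X$ is infinite so that the indifference structure of $\hat\succ$ is still "spread out" enough, and that $\mathcal{T}_X(\hat\succ)$ is generated by $\hat\succ$'s own contour sets.

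I expect the main obstacle to be precisely this last verification — showing the constructed limit preference $\hat\succ$ stays inside $\mathcal{P}^{ci}$, since membership in $\mathcal{P}^{ci}$ is a condition about the internally-generated topology $\mathcal{T}_X(\hat\succ)$, which changes when the preference changes. A cleaner route may be to choose the flattening more carefully: rather than collapsing everything strictly between $y$ and $x$, collapse only a judiciously chosen subset (for instance, identify just two points $a\sim_{\hat\succ} b$ that were strictly ranked under $\succ$, using the infinitude of $X$ to ensure both $a$ and $b$ still have strict comparisons to other elements) so that the resulting $\hat\succ$ manifestly has the property that $\mathcal{T}_X(\hat\succ)$ admits no open subset of an indifference class. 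Once $\hat\succ\in\mathcal{P}^{ci}$ is secured, the failure of Hausdorffness is immediate: a constant sequence at $\succ$ converges to the distinct point $\hat\succ$, so $\{\succ\}$ is not closed.
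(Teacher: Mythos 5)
Your first claim ($X$ finite $\Rightarrow\mathcal{P}^{ci}=\emptyset$) is fine and is essentially the paper's argument: the paper observes that a locally strict preference must have infinitely many indifference classes because the top class is the upper contour set of a representative of the second class, hence open; your ``minimal open neighborhood of a maximal element'' computation lands in the same place.

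For the second claim, however, the step you flag as ``the real work'' is a genuine gap, not a routine verification, and your specific construction does not in general close it. Collapse the interval $(y,x)=\{z:x\succ z\succ y\}$ onto $\langle y\rangle$ as you propose, and suppose there exists $w$ with $y\succ w$ and $\{z:y\succ z\succ w\}=\emptyset$ (a ``gap'' immediately below $y$). Then in the limit preference $\hat\succ$ the merged indifference class $\langle y\rangle\cup(y,x)$ is exactly $\{z:x\:\hat\succ\:z\:\hat\succ\:w\}$, a basic open set of $\mathcal{T}_X(\hat\succ)$; hence $\hat\succ\notin\mathcal{P}^{ci}$ and your limit point has escaped the space whose Hausdorffness you are testing. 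So the argument cannot proceed by an arbitrary choice of $x\succ c\succ y$; the location of gaps in $\succ$ dictates which flattening is admissible. Your suggested ``cleaner route'' of merging just two indifference classes $\langle a\rangle,\langle b\rangle$ runs into the same problem: if there are gaps just above $a$ and just below $b$, the merged class is again an open interval of $\mathcal{T}_X(\hat\succ)$.

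This is precisely the content of the paper's proof of Part 2. It first shows a locally strict preference can have no ``isolated'' indifference class (one with gaps on both sides), then selects a non-extremal, non-isolated point $x$ and does a one-sided flattening: it collapses the entire weak lower contour set $\{z:x\succsim z\}$ (or, symmetrically, the upper one) onto $u(x)$, choosing the side on which $x$ has no gap. The resulting merged class, being an entire lower (resp.\ upper) contour set of the new preference, could only be open if it were the strict lower (resp.\ upper) contour set of some $y$ with an empty interval down to $x$ --- which the choice of side rules out. If you want to complete your proof, you need to import this gap/isolated-point case analysis (or an equivalent device); without it the claim $\hat\succ\in\mathcal{P}^{ci}$ is simply unestablished, and as shown above it can be false.
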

\begin{proof}
\textbf{Part 1: If $X$ is finite, then $\mathcal{P}^{ci}=\emptyset$.}\\
Let $\succ$ be a preference that is locally strict.
We start with two observations about $\succ$: first, $\succ\neq\simeq$ and, second $\succ$ must contain an infinite number of distinct indifference classes.
Indeed, assume that $\succ$ only contains finitely many distinct indifference classes (say, $N$ indifference classes).
Then we can pick a representative from each indifference class and label them in an increasing way: $x_1\succ x_2 \succ (...) \succ x_N$.
Then $<x_1>$ is open as it is the upper contour set of $x_2$, thus $\succ$ is not locally strict.
We conclude that if $\succ$ is locally strict, then it has infinitely many (distinct) indifference classes.\footnote{In fact, it must have uncountably many distinct indifference classes.}
Consequently, if $\succ$ is locally strict, then $X$ must have infinitely many members.
Alternatively, if $X$ is finite, then $\mathcal{P}^{ci}=\emptyset$.\\
\textbf{Part 2: If $X$ is infinite, then $\mathcal{P}^{ci}$ is not Hausdorff in the final topology.}
We start with an observation about the topology $\mathcal{T}_X(\succ)$.
The basis for such a topology is given by sets of the form $I(z,y)=\{x:y\succ x\succ z\}$, where $y,z\in X\cup\{\infty,-\infty\}$, and we use the convention $I(-\infty,y)\equiv Lower(y)$ and $I(z,\infty)\equiv Upper(z)$.\footnote{This convention is for notational simplicity only, so as not to have different notation for intervals and upper/lower contour sets.}
Furthermore, we say a point $x$ is \emph{isolated} if the following is true: there are points $a$ and $b$ in $X$ such that (1) $a\succ x\succ b$, (2) $I(x,a)=I(b,x)=\emptyset.$
Next, we make two observations about locally strict preferences.
First, a locally strict preference cannot have isolated points.
If $x$ was an isolated point then $<x>=I(a,b)$ for the points $a$ and $b$ that make $x$ isolated.
This is a contradiction because it would imply $<x>$ is open.
Second, if $\succ$ is locally strict, $\succ$ must have infinitely many distinct indifference classes.
This is a consequence of the arguments in Part $1$.
Consequently, there is a point $x\in X$ such that $x$ is neither maximal nor minimal (\emph{i.e.}, there exist $a,b\in X$ such that $a\succ x\succ b$) because otherwise every point would be either maximal or minimal and there would only be two indifference classes, contradicting that $\succ$ is locally strict.
Putting these observations together we conclude that any locally strict preference must have a point that is neither maximal nor minimal, and this point cannot be isolated.
We now use these observations to prove that $\mathcal{P}^{ci}$ is not Hausdorff.
Pick any triplet $x,y,z\in X$ and choose any preference $\succ\in\mathcal{P}^{ci}$ that satisfies $z\succ x\succ y$.
Such a preference exists because every preference must have an $x$ that is neither maximal nor minimal.
Because $x$ cannot be isolated then at least one of the following must hold: either $I(x,z')\neq\emptyset$ for all $z'\in Upper(x)$, or $I(y',x)\neq\emptyset$ for all $y'\in Lower(x)$.
We will now construct a preference $\succ_1$ such that $\succ_1\neq \succ$ and $\succ$ belongs to every neighborhood of $\succ_1$, thus showing $\mathcal{P}^{ci}$ is not Haursdorff.
Without loss of generality, assume $I(x,z')\neq\emptyset$ for all $z'\in Upper(x)$; we can carry out an analogous construction for the other case.
For each pair $a,b\in X$ define $\succ_1$ as follows: if $a\in Upper(x)$ and $b\in Upper(x)$ then $\succ\vert\{a,b\}=\succ_1\vert\{a,b\}$, $a\in Upper(x)$ and $b\notin Upper(x)$ then $a\succ_1 b$, and if $a\notin Upper(x)$, $b\notin Upper(x)$ then $a\sim_1 b$.
We make three observations about $\succ_1$. 
First, $\succ_1\neq\succ$ because $x\sim_1 y$ and $x\succ y$.
Second, $\succ_1$ is locally strict.
To see this it suffices to show that no subset of the indifference class of $x$ under $\succ_1$ is open in $\mathcal{T}_X(\succ_1)$.
Let $G\subset \{w\in X:w\sim_1 x\}$ be an open set.
Because $G$ is open there must be a non-empty set $I(a,b)$ such that $I(a,b)\subset G$.\footnote{This is because sets of the form $I(a,b)$ with $b\succ_1 a$ form a basis for $\mathcal{T}_X(\succ_1)$.}
Because every point in $G$ is minimal by construction, $a=-\infty$.
Then, there must be an alternative $b\in X$ such that $I(-\infty,b)=Lower(b)\subset G$.
Because $I(x,z')\neq\emptyset$ for all $z'\in Upper(x)$, such a $b\in X$ cannot exist: if $b\in \{w:w\sim_1 x\}$ then $Lower(b)=\emptyset$, and if $b\succ_1 x$ then there exists $b'\in Lower(b)\cap Upper(x)$ so $Lower(b)\nsubset G$.
Thus, $G$ cannot be open.
Lastly, $\succ_1$ satisfies the following property: $a \succ_1 b \Rightarrow a\succ b$ for all $a,b\in X$.
This last observation implies that if $N$ is a neighborhood of $\succ_1$ then $\succ\in N$.
Therefore, $\mathcal{P}^{ci}$ is not Hausdorff.

\end{proof}

\begin{theoremfour}\nonumber
The space $\mathcal{P}^{clsg}$ is Hausdorff.
\end{theoremfour}
\begin{proof}
Let $\succ,\hat{\succ}\in\mathcal{P}^{clsg}$, $\succ\neq\hat{\succ}$.
We will show that there are neighborhoods of $\succ$ and $\hat{\succ}$ that are disjoint.
Because $\succ\neq\hat{\succ}$, there must be $x,y\in X$ such that $x\succ y$ and $x\:\hat{\nsucc}\:y$.
\\
\underline{Case 1: $y\:\hat{\succ}\:x$.} Then $B(\succ,\{x,y\})$ and $B(\hat{\succ},\{x,y\})$ are disjoint neighborhoods of $\succ$ and $\hat{\succ}$ respectively, and this concludes the proof.\\
\underline{Case 2: $y\:\hat{\sim}\:x$.} 
Because $\succ$ has no gaps, then $\{w:x\succ w\succ y\}\neq\emptyset$.
Furthermore, $\{w:x\succ w\succ y\}$ is open in $\mathcal{T}_X$ by continuity of $\succ$.
First, observe that there must be $z\in \{w:x\succ w\succ y\}$ such that $z\:\hat{\nsim}\:x$, otherwise $\{w:x\succ w\succ y\}$ would be an open set such that $z\:\hat{\sim}\:z'$ for all $z,z'\in \{w:x\succ w\succ y\}$, contradicting that $\hat{\succ}$ is locally strict.
Take $z\in \{w:x\succ w\succ y\}$ such that $z\:\hat{\nsim}\: x$.
If $x\:\hat{\succ}\:z$ then we get $y\:\hat{\sim}\:x\:\hat{\succ}\:z$. Thus, $y\:\hat{\succ}\:z$ and $z\succ y$, implying $B(\succ,\{z,y\})$ and $B(\hat{\succ},\{z,y\})$ are disjoint neighborhoods of $\succ$ and $\hat{\succ}$ thereby concluding the proof.
If $z\:\hat{\succ}\:x$ then we get $z\:\hat{\succ}\:x$ and $x\succ z$, implying  $B(\succ,\{z,x\})$ and $B(\hat{\succ},\{z,x\})$ are disjoint neighborhoods of $\succ$ and $\hat{\succ}$ thereby concluding the proof.

Therefore, we can always neighborhoods of $\succ$ and $\hat{\succ}$ that are disjoint, and this concludes the proof. 

\end{proof}

\end{document}